\newcommand{\beq}{\begin{equation}}
\newcommand{\eeq}{\end{equation}}
\newcommand{\bea}{\begin{eqnarray}}
\newcommand{\eea}{\end{eqnarray}}
\newcommand{\nn}{\nonumber}
\newcommand\noi{\noindent}
\newcommand{\tbf}{\textbf}
\newcommand{\ti}{\textit}
\newcommand{\rd}{\mathrm{d}}
\newcommand{\al}{\alpha}
\newcommand{\be}{\beta}
\newcommand{\bk}{\begin{cases}}
\newcommand{\ek}{\end{cases}}
\newcommand{\bs}{\boldsymbol}
\newcommand{\f}{\frac}
\newcommand{\vs}{\vspace{3mm}}
\newtheorem{definition}{Definition}
\newtheorem{proposition}{Proposition}
\newtheorem{theorem}{Theorem}
\newtheorem{corollary}{Corollary}
\newtheorem{lemma}{Lemma}
\theoremstyle{definition}
\newtheorem{remark}{\textbf{Remark}}
\newtheorem{example}{\textbf{Example}}
\begin{document}

\title[Separability, Superintegrability and Haantjes Geometry]{Classical Multiseparable Hamiltonian Systems, Superintegrability  and Haantjes Geometry}
\author{Daniel Reyes Nozaleda}
\address{Departamento de F\'{\i}sica Te\'{o}rica, Facultad de Ciencias F\'{\i}sicas, Universidad
Complutense de Madrid, 28040 -- Madrid, Spain \\ and Instituto de Ciencias Matem\'aticas, C/ Nicol\'as Cabrera, No 13--15, 28049 Madrid, Spain}
\email{danreyes@ucm.es, daniel.reyes@icmat.es}
\author{Piergiulio Tempesta}
\address{Departamento de F\'{\i}sica Te\'{o}rica, Facultad de Ciencias F\'{\i}sicas, Universidad
Complutense de Madrid, 28040 -- Madrid, Spain \\  and Instituto de Ciencias Matem\'aticas, C/ Nicol\'as Cabrera, No 13--15, 28049 Madrid, Spain}
\email{piergiulio.tempesta@icmat.es, ptempest@ucm.es}
\author{Giorgio Tondo}
\address{Dipartimento di Matematica e Geoscienze, Universit\`a  degli Studi di Trieste,
piaz.le Europa 1, I--34127 Trieste, Italy.}
\email{tondo@units.it}

\date{September 02, 2021}

\keywords{Haantjes algebras, separation of variables, superintegrability}

\begin{abstract}
We show that the theory of classical Hamiltonian systems admitting separating variables can be formulated in the context of ($\omega, \mathscr{H}$) structures. They are symplectic manifolds endowed with a compatible Haantjes algebra $\mathscr{H}$, namely an algebra of (1,1)-tensor fields with vanishing Haantjes torsion.  A special class of coordinates, called Darboux-Haantjes coordinates, will be constructed from the Haantjes algebras associated with a separable system. These coordinates enable the additive separation of variables of the corresponding Hamilton-Jacobi equation.

We shall prove that a multiseparable system admits as many $\omega\mathscr{H}$ structures as separation coordinate systems. In particular, we will show that a large class of multiseparable, superintegrable systems, including the Smorodinsky-Winternitz systems and some physically relevant  systems with three degrees of freedom, possesses multiple Haantjes structures.

\end{abstract}

\maketitle

\tableofcontents

\section{Introduction}
The prominence of integrable models in many areas of pure and applied mathematics and theoretical physics has motivated, in the last decades, a resurgence of interest in the algebraic and geometric structures underlying the notion of integrability. The study of the geometry of Hamiltonian integrable systems has a long history, dating back to the classical works by Liouville, Jacobi, St\"{a}ckel, Eisenhart, Arnold, etc. The approaches proposed in the literature are intimately related with the problem of the determination of suitable coordinate systems guaranteeing the additive separation of the Hamilton-Jacobi (HJ) equation. 

Recently, many new ideas coming from differential and algebraic geometry, topology and tensor analysis, have contributed to the formulation of important approaches such as the theory of bi-Hamiltonian systems \cite{Magri78}, the Lenard-Nijenhuis geometry \cite{MM1984} and the theory of Dubrovin-Frobenius manifolds \cite{D94}. These theoretical developments shed new light on the multiple connections among integrability, topological field theories, singularity theory, co-isotropic deformations of associative algebras, etc.
Besides, several integrable models, both classical and quantum ones, have  recently been discovered, in particular in the domain of superintegrability. 

Superintegrable systems are a special class of integrable systems which possess surprisingly rich algebraic and geometric properties  \cite{MF}, \cite{Nekh}, \cite{MPW}, \cite{PW2011}, \cite{TWR}, \cite{TTW}. Essentially, they  possess more independent integrals of motion than degrees of freedom. If a system with  $n$ degrees of freedom admits $2n-1$ functionally independent integrals of motion (the maximal number allowed), it is said to be \textit{maximally superintegrable}; if it admits $n+1$ integrals, then is minimally superintegrable.

Among the most famous examples of superintegrable models we mention the classical harmonic oscillator,  the Kepler potential, the Calogero-Moser potential, the Smorodinsky-Winternitz systems and the Euler top.  The presence of ``hidden'' symmetries, expressed by integrals which are second (or higher) degree polynomials in the momenta, usually allows us to determine the dynamical behaviour of superintegrable models. In the maximal case, the bounded orbits are closed and periodic \cite{Nekh}. As is well known, the phase space topology is also very rich: it can be described in terms of a symplectic bifoliation, determined by the standard Liouville-Arnold invariant fibration \cite{AKN} of Lagrangian tori, and complemented by a co-isotropic polar foliation \cite{Nekh}, \cite{Fasso}. 

Although  the present work focuses on classical systems, we point out that quantum superintegrable systems also possess many relevant properties. Exact quantum solvability of a Hamiltonian system, related to the existence of suitable Lie algebras of raising and lowering operators, could be regarded as the quantum analogue of the  classical notion of maximal superintegrability \cite{TTW}.

A fundamental class of integrable models is the \textit{separable} one: these models are characterized by the fact that one can find at least a system of canonical coordinates in which the corresponding Hamilton-Jacobi equation takes the additively separated form \begin{equation} \label{eq:W}
W= \sum_{k=1}^n W_k =\sum_{k=1}^n \int {p_k(q' _k; H_1, \ldots, H_n)_{\mid _{H_i=a_i} }dq'_k}.
\end{equation}

The problem of finding separating variables for integrable Hamiltonian systems has  been extensively investigated. 
In 1904, Levi-Civita proposed a test which permits to establish whether a given Hamiltonian separates in an assigned coordinate system \cite{L}. Another important result, due to Benenti \cite{Ben80}, states that
a family of Hamiltonian functions $\{H_i\}_{1\le i\le n}$ are separable in a set of
canonical coordinates $(\boldsymbol{q},\boldsymbol{p})$ if and only if they
are in {\rm separable  involution}, i.e. they satisfy the relations
\begin{equation} \label{eq:SI}
\{H_i,H_j\}_{\vert k}=\frac{\partial H_i}{\partial q^k}
\frac{\partial H_j}{\partial p_k}-\frac{\partial H_i}{\partial p_k}
\frac{\partial H_j}{\partial q^k}=0 \ , \quad 1\le k \le n
\end{equation}
where no summation over $k$ is understood. However, such theorem as well as the Levi--Civita test are not constructive,  and do not allow us to determine a complete integral of the Hamilton--Jacobi equation. 

A constructive approach for the determination of separating variables was given by Sklyanin  \cite{Skl} within the framework of Lax systems.
The Hamiltonian functions $\{H_i\}_{1\le i\le n}$ are separable in a set of
canonical coordinates $(\boldsymbol{q},\boldsymbol{p})$ if there exist $n$ suitable equations,  called the  Jacobi-Sklyanin separation equations for $\{H_i\}_{1\le i\le
n}$, having the form
\begin{equation} \label{eq:Sk}
\Phi_i(q^i,p_i; H_1,\ldots,H_n)=0 \qquad    {\rm  det}\left[\frac{\partial \Phi_i}{\partial H_j }\right] \neq 0, \quad i=1,\ldots,n \ .
 \end{equation} 
 These equations allow us to construct a solution of the HJ equation. 
In fact, by solving eq. (\ref {eq:Sk})
with respect to  $p_k=\frac{\partial W_k}{\partial q_k }$, we get
the additively separated form \eqref{eq:W}.

Nevertheless, the three above-mentioned criteria of separability are not intrinsic: in order to be applied, they require the explicit knowledge of the local chart $(\boldsymbol{q},\boldsymbol{p})$. To overcome such a drawback, in the last decades the modern theory of separation of variables (SoV) has been conceived in the context of symplectic and Poisson geometry; in particular, the bi-Hamiltonian theory has offered a fundamental geometric insight into the theory of integrable systems \cite{FP,MM1984}.

The main purpose of the present work is to establish a novel relationship between the theory of separable Hamiltonian systems and the geometry of an important class of tensor fields, the Haantjes tensors, introduced in \cite{Haa} as a relevant, natural generalization of the notion of Nijenhuis tensors \cite{Nij,Nij2}. The class of Nijenhuis tensors plays a significant role in differential geometry and the theory of almost-complex structures, due to the celebrated Newlander-Nirenberg theorem \cite{NN}.

Our approach is based on the notion of $\omega \mathscr{H}$ manifolds, introduced in \cite{TT2016prepr} by analogy with the theory of $\omega N$ manifolds \cite{MM1984,FP} for finite-dimensional Hamiltonian systems (see also \cite{FeMa}, \cite{MFrob} and \cite{MGall13} for a treatment of integrable hierarchies of PDEs). Essentially, an $\omega \mathscr{H}$ manifold is a symplectic manifold endowed with an algebra $\mathscr{H}$ of (1,1) tensor fields with vanishing Haantjes torsion, which are \textit{compatible} with the symplectic structure. Under the hypotheses of the \textit{Liouville-Haantjes} (LH) theorem proved in \cite{TT2016prepr},  a non-degenerate Hamiltonian system is completely integrable in the Liouville-Arnold sense if and only if it admits  a $\omega \mathscr{H}$ structure.

In our context, \emph{Haantjes chains} represent in the Haantjes framework the generalization of the notion of Lenard-Magri chain \cite{MagriLE} and of generalized Lenard chain \cite{FMT, MLenard} defined previously for quasi-bi-Hamiltonian systems \cite{MTlt}, \cite{MTltC}. By means of these structures, one obtains a complete description of the integrals of the motion of a system in terms of the associated Haantjes operators.

The problem of SoV can also be recast and studied in our approach.  Precisely, as stated in Theorem  \ref{th:SoVgLc} below, if an integrable system admits a \textit{semisimple} $\omega \mathscr{H}$ structure, one can derive  a set of coordinates, that we shall call the \textit{Darboux-Haantjes (DH) coordinates}, representing separation coordinates for the Hamilton-Jacobi equation associated with the system. In these coordinates, the symplectic form takes a Darboux form, and the operators of the Haantjes algebra take all simultaneously a diagonal form.
As we will show in Theorem \ref{th:multi}, and in the examples of Sections \ref{sec:construction} and \ref{sec:maximal},   multiseparable systems possess different  Haantjes structures associated in a nontrivial way with their separation coordinates. 

In the  study of separable Hamiltonian systems, the theory of $\omega N$ manifolds has proved to be a powerful tool. Since  any semisimple Haantjes algebra admits a Nijenhuis generator, we deduce that the $DH$ coordinates in the Haantjes scenario are also Darboux-Nijenhuis (DN) coordinates of the $\omega N$ theory. 

In the \textit{semisimple case}, we shall prove that there is a one-to-one correspondence between a given $\omega\mathscr{H}$ manifold and an equivalence class of $\omega N$ structures (see Section 3.7). Interestingly enough, given a semisimple $\omega\mathscr{H}$ manifold, one can find Nijenhuis  generators which fulfill both the algebraic and the \emph{differential} compatibility conditions, required by the $\omega N$ theory.

However, from a general, theoretical point of view, the two theories are not equivalent. Indeed, in the \textit{non-semisimple} case, there are integrable systems, as the Post-Winternitz system (discussed in \cite{TT2016prepr}), which admit a non-Abelian $\omega \mathscr{H}$ structure endowed with three Haantjes generators. In other words, there exist $\omega \mathscr{H}$ manifolds with several generators.  In these cases, an alternative description in terms of a standard $\omega N$ structure is not available, since in the $\omega N$ approach the existence of a unique Nijenhuis tensor $N$ is assumed.

This article is organized into two parts. In the first one, including Sections 2 and 3, for the sake of self-consistency we briefly summarize the notions necessary for the study of multiseparable systems. Precisely, in Section \ref{sec:NH}, the basic definitions concerning Nijenhuis and Haantjes tensors are proposed; in Section \ref{sec:integrable}, the Haantjes geometry is reviewed. In particular, the notions of Haantjes algebras, $\omega \mathscr{H}$ manifolds and Darboux-Haantjes coordinates are revised. In the second part, starting from Section 4, we shall propose the original results of our work. In Section \ref{sec:separation}, we propose the main theorem concerning the existence of Haantjes structures for separable systems. In Section \ref{sec:Drach}, as a direct application of the theory previously developed, we solve the problem of SoV for a family of Drach-Holt type systems, that were previously considered to be non-separable. Interestingly enough, the new separating variables we found are defined in the full phase space.  In Section \ref{sec:multiseparable}, this theorem is extended to the case of  multiseparable (and superintegrable) models. We propose in Section \ref{sec:lift} a novel geometric construction: a lift of operators from the configuration space $Q$ of dimension two to $T^{*}Q$, which generalizes the standard Yano lift \cite{YI1973}. By means of our procedure, a Haantjes operator can be lifted into another  Haantjes operator (unlike the Yano lift, which only preserves Nijenhuis operators, but not the Haantjes ones).
Section \ref{sec:construction} is devoted to the study of the Haantjes structures for the  Smorodinsky-Winternitz systems in the plane, whereas Section \ref{sec:anisotropic} deals with the study of the anisotropic oscillator. In Section \ref{sec:maximal}, the $\omega\mathscr{H}$ manifolds associated with certain important  multiseparable systems in three dimensions are determined. Future research perspectives are discussed in the final Section \ref{sec:future}.

\section{Nijenhuis  and Haantjes operators} \label{sec:NH}

\label{sec:1}

 In this Section, we review some basic algebraic results concerning the theory of Nijenhuis and Haantjes tensors. For a more complete treatment, see
the original papers~\cite{Haa,Nij}  and the related ones~\cite{Nij2,FN}.
\subsection{Geometric preliminaries}

Let $M$ be a real, differentiable $n$-dimensional manifold and $\boldsymbol{L}:TM\rightarrow TM$ a smooth $(1,1)$ tensor field, i.e.,~a field of linear operators on the tangent space at each point of $M$. In the following, all tensors will be assumed to be smooth.
\begin{definition}
The
 \textit{Nijenhuis torsion} of $\boldsymbol{L}$ is  the vector-valued $2$-form defined by 
\begin{equation}
 \label{eq:Ntorsion}
\mathcal{T}_ {\boldsymbol{L}} (X,Y):= \boldsymbol{L}^2[X,Y] +[\boldsymbol{L}X,\boldsymbol{L}Y]-\boldsymbol{L}\Bigl([X,\boldsymbol{L}Y]+[\boldsymbol{L}X,Y]\Bigr),
\end{equation}
where $X,Y \in TM$ and $[ \ , \ ]$ denotes the Lie bracket of two vector fields.
\end{definition}
In local coordinates $\boldsymbol{x}=(x^1,\ldots, x^n)$, the Nijenhuis torsion can be written as the skew-symmetric  $(1,2)$ tensor field
\begin{equation}
\label{eq:NtorsionLocal}
(\mathcal{T}_{\boldsymbol{L}})^i_{jk}=\sum_{\alpha=1}^n\biggl(\frac{\partial {\boldsymbol{L}}^i_k} {\partial x^\alpha} {\boldsymbol{L}}^\alpha_j -\frac{\partial {\boldsymbol{L}}^i_j} {\partial x^\alpha} {\boldsymbol{L}}^\alpha_k+\Bigl(\frac{\partial {\boldsymbol{L}}^\alpha_j} {\partial x^k} -\frac{\partial {\boldsymbol{L}}^\alpha_k} {\partial x^j}\Bigr) {\boldsymbol{L}}^i_\alpha \biggr)\ ,
\end{equation}
which possesses  $n^2(n-1)/2$ independent components.
\begin{definition}
 The \textit{Haantjes torsion} of $\boldsymbol{L}$ is the vector-valued $2$-form defined by
\begin{equation}
 \label{eq:Haan}
\mathcal{H}_{\boldsymbol{L}}(X,Y) := \boldsymbol{L}^2\mathcal{T}_{\boldsymbol{L}}(X,Y)+\mathcal{T}_{\boldsymbol{L}}(\boldsymbol{L}X,\boldsymbol{L}Y)-\boldsymbol{L}\Bigl(\mathcal{T}_{\boldsymbol{L}}(X,\boldsymbol{L}Y)+\mathcal{T}_{\boldsymbol{L}}(\boldsymbol{L}X,Y)\Bigr).
\end{equation}
\end{definition}

The skew-symmetry of the Nijenhuis torsion implies that the Haantjes torsion is also skew-symmetric.
Its local expression in explicit form is
\begin{eqnarray}
 \label{eq:HaanExCoord}
(\mathcal{H}_{\boldsymbol{L}})^i_{jk}&=&  \sum_{\alpha=1}^n
\biggl(-2  (\boldsymbol{L}^3)^i_\alpha \partial_{\mbox{[}j} \boldsymbol{L}^\alpha_{k\mbox{]}}+ (\boldsymbol{L}^2)^i_\alpha\Bigl( \partial_{\mbox{[}j} (\boldsymbol{L}^2) ^\alpha_{k\mbox{]}}+4  \sum_{\beta=1}^n\boldsymbol{L}^\beta_{\mbox{[}j} \partial_{\lvert \beta\rvert } \boldsymbol{L}^\alpha_{k\mbox{]}}\Bigr) \\
\nn &-&2 \boldsymbol{L}^i_{\alpha} \Bigr(\boldsymbol{L}^\beta_{\mbox{[}j} \partial_{\lvert \beta\rvert } (\boldsymbol{L}^2)^\alpha_{k\mbox{]}} +  (\boldsymbol{L}^2)^\beta_{\mbox{[}j}\partial_{\lvert \beta \rvert } (\boldsymbol{L}) ^{\alpha}_{k\mbox{]}}\Bigr)
+(\boldsymbol{L}^2)^\alpha_{\mbox{[}j}\partial_{\lvert \alpha \rvert } (\boldsymbol{L}^2) ^i_{k\mbox{]}}
\biggr) \ .
\end{eqnarray}
Here, for the sake of brevity, we have used the notation $\partial_j := \frac{\partial}{\partial x^j}$; it is understood that the indices between square brackets are to be skew-symmetrized, except those in $\lvert \cdot\rvert $.
 In \cite{TT2021}, the following notion was proposed.
\begin{definition}
A Haantjes (Nijenhuis)   operator is an operator field whose  Haantjes (Nijenhuis) torsion identically vanishes.
\end{definition}

\subsection{General properties of Haantjes operators}

First, we shall consider some specific cases for which the construction of the Nijenhuis and Haantjes torsions is very simple.

\begin{example}\label{ex:H2}
Let $dim~M=2$. Any operator field $\bs{L}:TM \to TM$ is a Haantjes operator. This can be proved by a straightforward calculation.
\end{example}

\begin{example}
Let dim $M=n$, $n\geq 2$ and $\boldsymbol{L}: TM\to TM$ be an operator field. Assume that in a suitable local coordinate chart $(x^1,\ldots, x^n)$ the operator $\boldsymbol{L}$ takes the diagonal form 
\begin{equation}
\boldsymbol{L}(\boldsymbol{x})=\sum _{i=1}^n l_{i }(\boldsymbol{x}) \frac{\partial}{\partial x^i}\otimes \rd x^i \label{eq:Ldiagonal} \ .
\end{equation} 
Then the Haantjes torsion of $\boldsymbol{L}$ identically vanishes.
\end{example}

Another interesting source of Nijenhuis and Haantjes operators is Classical Mechanics. Precisely, given a system of point masses in the $n$-dimensional affine Euclidean space, the inertia tensor of this system is a Haantjes tensor, whereas the planar inertia tensor is a Nijenhuis one \cite{TT2021}.

As is well known (see for instance~\cite{GVY}),  given  an invertible Nijenhuis operator, its inverse is also a Nijenhuis operator. The same property holds true for a Haantjes operator.

A crucial restriction in the Nijenhuis geometry is that, in general, the product of a Nijenhuis operator with an arbitrary $ \mathcal{C}^{\infty}(M)$-function  is no longer a Nijenhuis operator.
Instead, this is the case for Haantjes operators: therefore, they allow us to define new, interesting algebraic structures, as we shall see in Section \ref{sec:integrable}. The theory of these structures is based on the following 
\begin{proposition} [\cite{BogCMP}, \cite{BogI}] \hfill

i) Let  $\boldsymbol{L}$ be an operator  field. The following identity holds
\begin{equation} \label{eq:LtorsionLocal}
\mathcal{H}_{f \boldsymbol{I}+g \boldsymbol{L}}(X,Y)=g^4\, \mathcal{H}_{ \boldsymbol{L}}(X,Y),
\end{equation}
where $f,g \in C^{\infty}(M)$ are real functions and $\boldsymbol{I}$ denotes the identity operator on $TM$.

ii) Let $\boldsymbol{L}$ be a Haantjes operator. For any  polynomial in $\boldsymbol{L}$, with coefficients $a_{j}\in C^\infty(M)$, the associated Haantjes tensor vanishes, i.e.

\begin{equation}
\mathcal{H}_{\boldsymbol{L}}(X,Y)= \bs{0} \ \Longrightarrow \
\mathcal{H}_{(\sum_j a_{j} (\boldsymbol{x}) \boldsymbol{L}^j)}(X,Y)= \bs{0}.
\end{equation}
\end{proposition}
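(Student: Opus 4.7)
The plan is to establish part (i) by a direct computation of the Nijenhuis and Haantjes torsions of $\boldsymbol{M} := f\boldsymbol{I} + g\boldsymbol{L}$, and then to deduce part (ii) from (i) combined with the fact that powers of a Haantjes operator are again Haantjes.

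For part (i), I would begin by preparing two elementary identities: the Leibniz-type relation
\begin{equation*}
[fX,\,gY] = fg\,[X,Y] + f(Xg)Y - g(Yf)X,
\end{equation*}
valid for any smooth functions and vector fields, and the verification that $\mathcal{T}_{f\boldsymbol{I}}(X,Y)\equiv 0$, obtained by a two-line direct computation from \eqref{eq:Ntorsion}. Armed with these, one expands $\mathcal{T}_{\boldsymbol{M}}$ and organizes the output into
\begin{equation*}
\mathcal{T}_{\boldsymbol{M}}(X,Y) \;=\; g^{2}\,\mathcal{T}_{\boldsymbol{L}}(X,Y) \;+\; g\,\mathcal{R}(X,Y),
\end{equation*}
where $\mathcal{R}$ is a $(1,2)$-tensorial expression linear in $df$ and $dg$ and involving at most first powers of $\boldsymbol{L}$. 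The central step is then to substitute this decomposition into the Haantjes formula \eqref{eq:Haan} applied to $\boldsymbol{M}$. All the contributions from $\mathcal{R}$ come in pairs that must cancel identically, so that only $g^{4}\mathcal{H}_{\boldsymbol{L}}(X,Y)$ survives; these cancellations are forced by the very specific symmetric quartic structure appearing in the definition of $\mathcal{H}_{\boldsymbol{M}}$. An equivalent route is to work entirely in coordinates using the explicit formula \eqref{eq:HaanExCoord}, which makes the cancellations algorithmic but less transparent.

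For part (ii), I would argue in two stages. First, every power $\boldsymbol{L}^{k}$ of a Haantjes operator is itself Haantjes; this is classical (Bogoyavlensky \cite{BogI}) and can be obtained either by induction on $k$ from the composition formula for Nijenhuis torsions, or by exploiting the fact that powers of $\boldsymbol{L}$ preserve the characteristic distributions of $\boldsymbol{L}$ and act on them with compatible spectral data. Second, the linear combination $\sum_{j} a_{j}(\boldsymbol{x})\boldsymbol{L}^{j}$ is handled by induction on the polynomial degree: the base case $a_{0}\boldsymbol{I} + a_{1}\boldsymbol{L}$ is exactly part (i) with $f = a_{0}$, $g = a_{1}$, and the inductive step reduces to the Haantjes character of a sum of two commuting Haantjes operators $\boldsymbol{A} = \sum_{j<k}a_{j}\boldsymbol{L}^{j}$ and $\boldsymbol{B} = a_{k}\boldsymbol{L}^{k}$. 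One can either generalise (i) along the same lines for two commuting Haantjes operators, or repeat the computation of (i) with $\boldsymbol{L}$ replaced by the vector of commuting Haantjes operators $(\boldsymbol{I}, \boldsymbol{L}, \ldots, \boldsymbol{L}^{k})$, using that they share characteristic distributions.

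The main obstacle is the combinatorial bookkeeping in part (i): since $\mathcal{H}_{\boldsymbol{L}}$ is quartic in the operator field, the expansion of $\mathcal{H}_{f\boldsymbol{I}+g\boldsymbol{L}}$ produces a large collection of terms involving $f$, $g$ and their directional derivatives along $X$, $Y$, $\boldsymbol{L}X$, $\boldsymbol{L}Y$; confirming that every contamination by $df$ and $dg$ vanishes, leaving exactly $g^{4}\mathcal{H}_{\boldsymbol{L}}$, is the delicate point on which the whole proposition rests. A conceptually cleaner presentation might be achieved by introducing the \emph{mixed} Nijenhuis and Haantjes torsions of two operator fields, exploiting their multilinearity slot by slot; but the preparatory formalism may turn out to be heavier than the bare calculation suggested above.
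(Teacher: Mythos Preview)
The paper does not supply its own proof of this proposition: it is stated with attribution to Bogoyavlenskij \cite{BogCMP,BogI} and then used as a black box. So there is nothing in the paper to compare your sketch against directly; the actual arguments live in the cited references, where (i) is obtained from explicit algebraic identities for the torsions and (ii) follows from a general closed formula expressing $\mathcal{H}_{p(\boldsymbol{L})}$ in terms of $\mathcal{T}_{\boldsymbol{L}}$.

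Your outline for (i) is the right shape: compute $\mathcal{T}_{f\boldsymbol{I}+g\boldsymbol{L}}$, isolate the $g^{2}\mathcal{T}_{\boldsymbol{L}}$ piece plus a remainder carrying $df,dg$, then feed this into \eqref{eq:Haan} and watch the cross terms cancel. That is essentially Bogoyavlenskij's computation, and your honest assessment that the bookkeeping is the only real difficulty is accurate.

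For (ii), however, there is a genuine soft spot. Your inductive step appeals to ``the Haantjes character of a sum of two commuting Haantjes operators'', but that statement is \emph{false} in general --- commuting Haantjes operators need not have Haantjes sum, which is precisely why Definition~\ref{def:HM} imposes \eqref{eq:Hmod} as a separate axiom rather than deriving it. What saves the situation here is not commutativity alone but the much stronger fact that $\boldsymbol{A}$ and $\boldsymbol{B}$ are both polynomials in the \emph{same} operator $\boldsymbol{L}$, hence share its full eigen-decomposition. You gesture at this with ``they share characteristic distributions'', but as written the inductive step does not close: you would still need an identity of the type $\mathcal{H}_{\boldsymbol{A}+\boldsymbol{B}}=0$ for two polynomials in $\boldsymbol{L}$, which is exactly the statement you are trying to prove. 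The clean fix is to bypass the induction and follow Bogoyavlenskij in proving a single identity expressing $\mathcal{H}_{p(\boldsymbol{L})}$ as a polynomial in $\boldsymbol{L}$ acting on combinations of $\mathcal{T}_{\boldsymbol{L}}$ evaluated on $\boldsymbol{L}^{a}X,\boldsymbol{L}^{b}Y$; the Haantjes hypothesis on $\boldsymbol{L}$ then kills everything at once.
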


As proved in Ref. \cite{TT2021}, a Haantjes operator generates a cyclic Haantjes algebra (i. e., a cyclic algebra of Haantjes operators) over the ring of smooth functions on $M$. Cyclic Haantjes algebras will play a special role in our theory, as we shall clarify in the coming sections.

\section{Integrable frames, Haantjes algebras and $\omega \mathscr{H}$ manifolds}
\label{sec:integrable}
\subsection{Integrability}

In order to formulate our approach to separability, we shall review the relationship between Haantjes geometry and integrability. A more detailed treatment as well as the proofs of the statements reviewed here are available in Refs. \cite{TT2021}, \cite{TT2016prepr}. 

We start recalling that a \textit{reference frame} is a set of $n$ vector fields $\{Y_1,\ldots,Y_n\}$ satisfying the following property: given an open set  $U\subseteq M$,   $\forall ~\boldsymbol{x}\in U$ the frame represents a basis of the tangent space $T_{\boldsymbol{x}}U$.  Given two frames  $\{X_1,\ldots,X_n\}$ and $\{Y_1,\ldots,Y_n\}$, assume that $n$ nowhere vanishing smooth
functions $f_i$ exist, such that
\[
 X_i= f_i(\boldsymbol{x}) Y_i \ , \qquad\qquad i=1,\ldots,n \ .
\]
Then we shall say that the two frames are \textit{equivalent}. 
Let  $\{U, (x^1,\ldots,x^n)\}$ be a local chart of $U$. The frame formed by the vector fields $\left\{\frac{\partial}{\partial x^1}, \ldots, \frac{\partial}{\partial x^n}\right\}$ will be said to be a \textit{natural} frame.
\begin{definition}\label{def:Iframe}
A reference frame equivalent to a natural frame will be said to be \emph{integrable}.
\end{definition}
In the forthcoming considerations, given an operator $\boldsymbol{L}$, we shall denote by  $Spec(\boldsymbol{L}):= \{ l_1(\boldsymbol{x}), \ldots, l_s(\boldsymbol{x})\}$, $s\in \mathbb{N}\backslash\{0\}$,  the set of the pointwise distinct eigenvalues of  $\boldsymbol{L}$, assumed by default to be \emph{real}. The distribution of all the generalized eigenvector fields associated with the eigenvalue $l_i=l_i(\boldsymbol{x})$  will be denoted by
\begin{equation}
 \label{eq:DisL}
 \mathcal{D}_i: = \ker \Bigl(\boldsymbol{L}-l_i\boldsymbol{I}\Bigr)^{\rho_i}, \qquad i=1,\ldots,s
 \end{equation}
where $\rho_i \in \mathbb{N}\backslash \{0\}$ is the Riesz index of $l_i$ (which by hypothesis will always be taken to be independent of $\boldsymbol{x}$). The value $\rho=1$ characterizes the \textit{proper eigen-distributions}, namely the eigen-distributions of proper eigenvector fields of $\bs{L}$.

\begin{definition}\label{def:Kdiag}
An operator field $\boldsymbol{L}$ is \textit{semisimple}
if in each open neighborhood $U\subseteq M$ there exists a reference frame formed by proper eigenvector fields of $\boldsymbol{L}$. Moreover, $\boldsymbol{L}$ is \textit{simple} if all of its eigenvalues are pointwise distinct, namely $l_i (\boldsymbol{x}) \neq l_j(\boldsymbol{x})$, $i, j=1,\ldots,n$, $ \forall \boldsymbol{x}\in M$.
\end{definition}
A frame of proper eigenvectors will be said to be a proper eigen-frame of $\boldsymbol{L}$. If the frame contains generalized eigenvectors, it will be said to be a generalized eigen-frame.

\begin{definition}\label{def:mI}
Given a set of distributions $\{\mathcal{D}_i, \mathcal{D}_j, \ldots, \mathcal{D}_k \}$ of an operator $\bs{L}$, we shall say that such distributions  are \textit{mutually integrable} if

(i) each of them is integrable;

(ii) any  sum $\mathcal{D}_i  + \mathcal{D}_j +\cdots +\mathcal{D}_s$ (where all indices $i,j,\ldots, s$ are different) is also integrable.
\end{definition}

In 1955, J. Haantjes proved a crucial result:

\begin{theorem}[\cite{Haa}]\label{th:Haan}
Let $\boldsymbol{L}: TM \to TM$ be an   operator field;  assume that the rank of each generalized eigen-distribution $\mathcal{D}_i$, $i=1,\ldots, s$ is independent of $\boldsymbol{x}\in M$.
The vanishing of the Haantjes torsion
\begin{equation}
 \label{eq:HaaNullTM}
\mathcal{H}_{\boldsymbol{L}}(X, Y)= \mathbf{0} \qquad\qquad\qquad \forall ~ X, Y \in TM
\end{equation}
is a  sufficient condition to ensure the mutual integrability of the  generalized eigen-distributions $\{\mathcal{D}_1,\ldots, \mathcal{D}_s\}$.
In addition, if $\boldsymbol{L}$ is  semisimple, condition \eqref{eq:HaaNullTM} is also necessary.
\end{theorem}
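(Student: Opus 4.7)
The plan is to invoke Frobenius' theorem: each distribution $\mathcal{D}_i$ (resp.\ each admissible sum) is integrable if and only if it is involutive. The bridge between Haantjes torsion and involutivity will be an algebraic identity expressing $\mathcal{H}_{\boldsymbol{L}}(X,Y)$ as a polynomial in $\boldsymbol{L}$ applied to $[X,Y]$, valid whenever $X \in \mathcal{D}_i$ and $Y \in \mathcal{D}_j$. A preliminary observation that makes this strategy work is that both $\mathcal{T}_{\boldsymbol{L}}$ and $\mathcal{H}_{\boldsymbol{L}}$, although defined via Lie brackets in \eqref{eq:Ntorsion} and \eqref{eq:Haan}, are genuine tensor fields, i.e.\ $C^\infty(M)$-bilinear on vector fields; a short bracket computation confirms this, and it is what legitimises treating the eigenvalue functions $l_i(\boldsymbol{x})$ as scalar multipliers inside the torsion expressions.

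First I would settle the semisimple case, where $\rho_i = 1$ for every $i$. Taking proper eigenvector fields $X \in \mathcal{D}_i$, $Y \in \mathcal{D}_j$ with $\boldsymbol{L}X = l_i X$, $\boldsymbol{L}Y = l_j Y$, and exploiting $C^\infty(M)$-bilinearity in \eqref{eq:Haan}, one obtains
\begin{equation}
\mathcal{H}_{\boldsymbol{L}}(X,Y) = (\boldsymbol{L}-l_i\boldsymbol{I})(\boldsymbol{L}-l_j\boldsymbol{I})\,\mathcal{T}_{\boldsymbol{L}}(X,Y).
\end{equation}
A direct expansion of \eqref{eq:Ntorsion} on the same pair produces
\begin{equation}
\mathcal{T}_{\boldsymbol{L}}(X,Y) = (\boldsymbol{L}-l_i\boldsymbol{I})(\boldsymbol{L}-l_j\boldsymbol{I})[X,Y] + (l_i-l_j)\bigl(X(l_j)Y + Y(l_i)X\bigr),
\end{equation}
and the tail term is killed by the outer factor, since $(\boldsymbol{L}-l_j\boldsymbol{I})Y = 0$, $(\boldsymbol{L}-l_i\boldsymbol{I})X = 0$, and the two factors commute. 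Hence $\mathcal{H}_{\boldsymbol{L}}(X,Y) = (\boldsymbol{L}-l_i\boldsymbol{I})^2(\boldsymbol{L}-l_j\boldsymbol{I})^2[X,Y]$. Under semisimplicity, $\ker(\boldsymbol{L}-l_i\boldsymbol{I})^2(\boldsymbol{L}-l_j\boldsymbol{I})^2 = \mathcal{D}_i + \mathcal{D}_j$, so the vanishing of $\mathcal{H}_{\boldsymbol{L}}$ forces $[X,Y] \in \mathcal{D}_i + \mathcal{D}_j$. Specialising $i = j$ yields involutivity of each $\mathcal{D}_i$; the general pair, together with $C^\infty(M)$-bilinearity, extends the conclusion to any partial sum $\mathcal{D}_i + \mathcal{D}_j + \cdots + \mathcal{D}_k$, delivering mutual integrability by Frobenius.

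For the general case $\rho_i > 1$ I would argue by induction on $\rho_i + \rho_j$, decomposing each $X \in \mathcal{D}_i$ along a Jordan chain: $X_0 := X$, $X_{r+1} := (\boldsymbol{L}-l_i\boldsymbol{I})X_r$, so that $X_{r} \in \ker(\boldsymbol{L}-l_i\boldsymbol{I})^{\rho_i - r}$ and the inductive hypothesis applies to $X_1$. Substituting $\boldsymbol{L}X = l_i X + X_1$ into the above semisimple derivation generates additional correction terms, but these are controlled because they involve vectors of strictly lower Riesz index, already known to satisfy the involutivity statement. The outcome is an identity of the form $\mathcal{H}_{\boldsymbol{L}}(X,Y) \equiv (\boldsymbol{L}-l_i\boldsymbol{I})^{2\rho_i}(\boldsymbol{L}-l_j\boldsymbol{I})^{2\rho_j}[X,Y]$ modulo vectors already in $\mathcal{D}_i + \mathcal{D}_j$, and the kernel of the leading polynomial is exactly $\mathcal{D}_i + \mathcal{D}_j$. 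I expect this step to be the main technical obstacle: the scalar-multiplication simplification of the semisimple case is lost, and the combinatorics of the Jordan-chain corrections must be tracked carefully, which is presumably why Haantjes' original formulation requires the eigen-distribution ranks to be locally constant.

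The converse, under semisimplicity, is by contrast quick. If the distributions $\{\mathcal{D}_1,\ldots,\mathcal{D}_s\}$ are mutually integrable, a simultaneous flow-box construction (applying Frobenius to each partial sum and refining inductively) yields a local chart $(x^1,\ldots,x^n)$ in which the coordinate basis is adapted to all $\mathcal{D}_i$ at once; in this chart the semisimple operator $\boldsymbol{L}$ takes the diagonal form \eqref{eq:Ldiagonal} with $l_k(\boldsymbol{x})$ equal to the eigenvalue labelling the distribution containing $\partial/\partial x^k$. The Example preceding the theorem then gives $\mathcal{H}_{\boldsymbol{L}} = 0$, closing the equivalence.
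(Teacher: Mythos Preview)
The paper does not prove Theorem~\ref{th:Haan}: it is stated as a classical result of Haantjes (reference~\cite{Haa}) and merely quoted, so there is no proof in the text to compare yours against. Your argument is essentially the classical one. The semisimple part is correct and cleanly executed: the identities $\mathcal{H}_{\boldsymbol{L}}(X,Y)=(\boldsymbol{L}-l_i\boldsymbol{I})(\boldsymbol{L}-l_j\boldsymbol{I})\,\mathcal{T}_{\boldsymbol{L}}(X,Y)$ and $\mathcal{T}_{\boldsymbol{L}}(X,Y)=(\boldsymbol{L}-l_i\boldsymbol{I})(\boldsymbol{L}-l_j\boldsymbol{I})[X,Y]+(l_i-l_j)\bigl(X(l_j)Y+Y(l_i)X\bigr)$ are exactly right, and the conclusion $[X,Y]\in\mathcal{D}_i+\mathcal{D}_j$ follows as you say. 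The converse via an adapted chart and Example~2 is also the standard route.

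For the non-semisimple direction your sketch is honest about being only an outline. The inductive scheme on the Riesz index is the right idea, but as written it is not yet a proof: the ``identity modulo $\mathcal{D}_i+\mathcal{D}_j$'' you aim for requires bookkeeping that you have not carried out, and the exponents $2\rho_i,2\rho_j$ in your target formula are not what actually emerges (the Haantjes torsion is quartic in $\boldsymbol{L}$, so the polynomial degree does not scale with $\rho_i$; what changes is the depth of the kernel you must identify). If you want a complete argument here you should either consult Haantjes' original paper or work pointwise with a Jordan basis and verify directly that each bracket $[X,Y]$ of generalized eigenvectors lands in the appropriate sum of generalized eigenspaces.
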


Consequently, under the previous assumptions, one can  select  local coordinate charts in  which $\boldsymbol{L}$ takes a  block-diagonal form. An equivalent statement can be formulated in terms of the existence of integrable generalized eigen-frames  of $\boldsymbol{L}$.

\begin{proposition}\label{cor:Hframe}
The vanishing of the Haantjes torsion of an operator field $\boldsymbol{L}$ is a  sufficient condition to ensure that $\boldsymbol{L}$ admits an equivalence class  of integrable generalized eigen-frames, where $\boldsymbol{L}$ takes a block-diagonal form. Furthermore, if $\boldsymbol{L}$ is semisimple, the condition is also necessary and $\boldsymbol{L}$ takes a diagonal form; if $\boldsymbol{L}$ is \emph{simple} each  of its proper eigen-frames is integrable.
\end{proposition}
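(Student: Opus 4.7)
The approach is essentially to translate Theorem~\ref{th:Haan}, phrased in terms of distributions, into the equivalent frame-theoretic statement of the proposition. First I would invoke Theorem~\ref{th:Haan} to deduce from $\mathcal{H}_{\bs{L}}\equiv \bs{0}$ the mutual integrability of the generalized eigen-distributions $\{\mathcal{D}_1,\ldots,\mathcal{D}_s\}$. Fixing a point $\bs{x}_0\in M$, Frobenius' theorem applied iteratively to each $\mathcal{D}_i$ and to the mutual sums produces a local chart $(U,(x^1,\ldots,x^n))$ centred at $\bs{x}_0$ whose coordinates split into $s$ groups $(x^{I_1},\ldots,x^{I_s})$ with $|I_i|=\dim \mathcal{D}_i$, such that $\{\partial/\partial x^k\}_{k\in I_i}$ spans $\mathcal{D}_i$ on $U$. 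The mutual integrability of all sums of the $\mathcal{D}_i$'s is precisely what allows the simultaneous straightening: it ensures that the first integrals selected at each stage remain compatible with those used at the previous ones.

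In such a chart, the natural frame $\{\partial/\partial x^1,\ldots,\partial/\partial x^n\}$ is a generalized eigen-frame of $\bs{L}$: each block $\{\partial/\partial x^k\}_{k\in I_i}$ is a basis of the $\bs{L}$-invariant distribution $\mathcal{D}_i$, so the matrix of $\bs{L}$ in this basis is block-diagonal with the $i$-th block of size $\dim \mathcal{D}_i$ and all its eigenvalues equal to $l_i$. Multiplying each vector field of this frame by a nowhere-vanishing smooth function yields, by Definition~\ref{def:Iframe}, another integrable generalized eigen-frame in which $\bs{L}$ retains the same block-diagonal structure (only the internal entries rescale). These rescalings constitute the claimed equivalence class.

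For semisimple $\bs{L}$, every Riesz index $\rho_i=1$, so the nilpotent part of each block vanishes and $\bs{L}$ becomes literally diagonal with eigenvalues $l_i$ in the natural frame, i.e.\ takes the form~\eqref{eq:Ldiagonal}. If in addition $\bs{L}$ is simple, each $\dim \mathcal{D}_i=1$ and the construction produces directly a proper eigen-frame, whose arbitrary rescalings by nowhere-vanishing functions remain proper eigen-frames. For the converse in the semisimple case: if an integrable proper eigen-frame exists then, in the corresponding natural chart, $\bs{L}$ is of the form~\eqref{eq:Ldiagonal}, and the example recalled above (vanishing of the Haantjes torsion for operators with a diagonal local representation) yields $\mathcal{H}_{\bs{L}}\equiv \bs{0}$, proving necessity.

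The main obstacle I anticipate lies in the first step, the construction of a single chart simultaneously adapted to all the $\mathcal{D}_i$'s: rank-$r$ distributions are individually straightened by Frobenius, but assembling several of them into one chart whose coordinates exhibit the prescribed block structure requires the compatibility provided by mutual integrability of the sums, together with a careful inductive selection of the first integrals at each stage of the construction. Every other step is a transcription of standard linear-algebraic facts about invariant subspaces into the language of frames.
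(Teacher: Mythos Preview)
Your proposal is correct and matches the paper's intent. The paper does not supply an independent proof of Proposition~\ref{cor:Hframe}: it presents the statement immediately after Theorem~\ref{th:Haan} as ``an equivalent statement \ldots\ in terms of the existence of integrable generalized eigen-frames,'' and your argument is precisely the standard unpacking of that equivalence---invoking Theorem~\ref{th:Haan} for mutual integrability, then the simultaneous Frobenius straightening to produce the adapted chart, with the semisimple converse handled via the diagonal example~\eqref{eq:Ldiagonal}. The technical point you flag (assembling a single chart adapted to all the $\mathcal{D}_i$'s from mutual integrability of their sums together with the direct-sum splitting $TM=\bigoplus_i\mathcal{D}_i$) is indeed the only nontrivial step, and it is exactly the content the paper is importing from Haantjes' original work and from~\cite{TT2021}.
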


\subsection{Haantjes algebras}

The notion of Haantjes algebra, introduced and discussed in \cite{TT2021}, is a crucial piece of the geometric construction we wish to propose for the analysis of separable systems. 
\begin{definition}\label{def:HM}
A Haantjes algebra of rank $m$ is a pair    $(M, \mathscr{H})$ with the following properties:
\begin{itemize}
\item
$M$ is a differentiable manifold of dimension $\mathrm{n}$;
\item
$ \mathscr{H}$ is a set of Haantjes  operators $\boldsymbol{K}:TM\rightarrow TM$   that  generate
\begin{itemize}
\item
a free module of rank $\mathrm{m}$   over the ring of smooth functions on $M$:
\begin{equation}
\label{eq:Hmod}
\mathcal{H}_{\bigl( f\boldsymbol{K}_{1} +
                             g\boldsymbol{K}_2\bigr)}(X,Y)= \mathbf{0}
 \ , \qquad\forall\, X, Y \in TM \ , \quad \, f,g \in C^\infty(M)\  ,\quad \forall ~\boldsymbol{K}_1,\boldsymbol{K}_2 \in  \mathscr{H};
\end{equation}
  \item
a ring  w.r.t. the composition operation
\begin{equation}
 \label{eq:Hring}
\mathcal{H}_{\bigl(\boldsymbol{K}_1 \, \boldsymbol{K}_2\bigr)}(X,Y)=\mathbf{0} \ , \qquad
\forall\, \boldsymbol{K}_1,\boldsymbol{K}_2\in  \mathscr{H} , \quad\forall\, X, Y \in TM\ .
\end{equation}
\end{itemize}
\end{itemize}
If
\begin{equation}
\boldsymbol{K}_1\,\boldsymbol{K}_2=\boldsymbol{K}_2\,\boldsymbol{K}_1 \ , \quad\qquad\ \boldsymbol{K}_1,\boldsymbol{K}_2 \in  \mathscr{H}\ ,
\end{equation}
the  algebra $(M, \mathscr{H})$ will be said to be an Abelian Haantjes algebra. Moreover, if   the identity operator $\boldsymbol{I}\in \mathscr{H}$, then $(M, \mathscr{H})$ will be said to be a Haantjes algebra with identity.
\end{definition}
In other words, the set $\mathscr{H}$ can be regarded as an associative algebra of Haantjes operators. Observe that if $\boldsymbol{K} \in \mathscr{H}$, then the powers $\boldsymbol{K}^i \in \mathscr{H}  ~ \forall\, i\in \mathbb{N} \backslash \{0\}$.

Haantjes algebras possess several important properties. Among them, we recall that for a given Abelian Haantjes algebra $\mathscr{H}$ there exists associated a set of  coordinates, called \textit{Haantjes coordinates},   by means of which  all $\boldsymbol{K}\in \mathscr{H}$ can  be  written simultaneously in a  block-diagonal form.  
In particular, if $\mathscr{H}$ is also semisimple, on each set of Haantjes coordinates all $\boldsymbol{K}\in \mathscr{H}$ can  be written simultaneously  in a  diagonal form \cite{TT2021}.

\subsection{Haantjes chains}
The notion of Haantjes chains, which generalizes that of Lenard-Magri chains \cite{Magri78,MagriLE}  has been proposed in \cite{TT2016prepr}. In the forthcoming analysis, Haantjes chains will enable us to build a bridge between the Haantjes geometry and the theory of separable systems. Other generalizations have also been proposed in the literature of the last decades \cite{MT,MTPLA,MTRomp,FMT,FP}.

%

\begin{definition}
 Let $( M,\mathscr{H})$ be a Haantjes algebra of rank $\mathrm{m}$. A  function $H\in C^{\infty}(M)$ is said to generate a Haantjes chain of 1-forms of length $\mathrm{m}$ if  there exist
a distinguished basis  $\{\boldsymbol{K}_1,\ldots, \boldsymbol{K}_m\}$ of $\mathscr{H}$
 such that
\begin{equation} \label{eq:MHchain}
\rd (\boldsymbol{K}^T_\alpha \,\rd H )=\boldsymbol{0}  \ ,  \quad\qquad \alpha=1,\ldots ,m 
\end{equation}
where $\boldsymbol{K}^{T}_{\alpha}: T^{*}M \to T^{*}M$ is the transposed operator of $\boldsymbol{K}_{\alpha}$ . The (locally) exact 1-forms $\rd H_i$ such that
$$
\rd H_\alpha=\boldsymbol{K}^T_\alpha \,\rd H \ 
$$
(assumed to be linearly independent) are called the elements of the Haantjes chain of length $\mathrm{m}$ generated by $H$ and the functions $H_\alpha\in C^{\infty}(M)$ are their potential functions.
\end{definition}

Given a basis $ \{ \boldsymbol{K}_1,  \boldsymbol{K}_2,\ldots,\boldsymbol{K}_{m}\} $ of $\mathscr{H}$, let us denote by
\begin{equation} \label{eq:codKH}
\mathcal{D}_H^\circ:= \langle \boldsymbol{K}_1^T \rd H,  \boldsymbol{K}_2^T \rd H,\ldots,\boldsymbol{K}_{m}^T\,\rd H \rangle
\end{equation}
the co-distribution generated by a function $H$, and by $\mathcal{D}_H$ the distribution  of the vector fields annihilated by them (of rank  $(n-m)$). A result proved in \cite{TT2016prepr} states that the function $H$  generates a Haantjes chain $\eqref{eq:MHchain}$
if and only if $\mathcal{D}^\circ_H$ (or equivalently $\mathcal{D}_H$) is Frobenius-integrable.

Now, we shall briefly review the theory of $\omega \mathscr{H}$ or symplectic-Haantjes manifolds, firstly introduced in \cite{TT2016prepr}. They offer a natural theoretical framework for the formulation of the theory of Hamiltonian integrable systems. 

\subsection{$\omega \mathscr{H}$ manifolds}

\begin{definition}\label{def:oHman}
A symplectic--Haantjes (or $\omega \mathscr{H}$) manifold  of class $\mathrm{m}$ is a triple $( M,\omega,\mathscr{H})$ which satisfies the following properties:
\begin{itemize}
\item[i)]
$(M,\omega)$  is a   symplectic  manifold of dimension $\mathrm{ 2 n}$;
\item[ii)]
$\mathscr{H}$ is a Haantjes algebra of rank $\mathrm{m}$;
\item[iii)]
$(\omega,\mathscr{H})$ are algebraically compatible, that is
$$
\omega(X,\boldsymbol{K} Y)=\omega(\boldsymbol{K} X,Y)  \qquad \forall \boldsymbol{K} \in \mathscr{H}\ ,
$$
or equivalently
\begin{equation}\label{eq:compOmH}
\boldsymbol{\Omega}\, \boldsymbol{K} =\boldsymbol{K}^T\boldsymbol{\Omega} ,\qquad \ \forall \boldsymbol{K} \in \mathscr{H}\ .
\end{equation}
\end{itemize}
\noi
Hereafter $\boldsymbol{\Omega}:=\omega ^\flat:TM\rightarrow T^*M$ denotes the  fiber bundles isomorphism defined by
$$
\omega(X,Y)=\langle\boldsymbol{\Omega} X,Y \rangle\qquad\qquad\forall X, Y \in TM\ ,
$$
and the map $\boldsymbol{P}:=\boldsymbol{\Omega}^{-1}:T^*M \rightarrow TM$  is the Poisson bivector induced by  the symplectic structure $\omega$.
\par
If the identity operator $\boldsymbol{I}$ belongs to $\mathscr{H}$, then $( M,\omega,\mathscr{H})$ will be said to be an $\omega \mathscr{H}$ manifold with identity. If $\mathscr{H}$ is an Abelian Haantjes algebra, we shall say that the resulting $\omega \mathscr{H}$ manifold is Abelian.
\end{definition}

\begin{definition}
An $\omega \mathscr{H}$ manifold $(M, \omega, \mathscr{H})$ is semisimple if $ \mathscr{H}$ is a semisimple Haantjes algebra.

\end{definition}
Observe that a simple realization of the notion of Haantjes algebra is given in a Darboux chart
$\{ \boldsymbol{x}=(q^1,\ldots q^n, p_1,\ldots,p_n)\}$ by
\begin{equation}\label{eq:Hdiag}
\boldsymbol{K}_\alpha=\sum _{i=1}^{n} l_{i }^{(\alpha)}(\boldsymbol{x})
\Big(\frac{\partial}{\partial q^i}\otimes \rd q^i + \frac{\partial}{\partial p_i}\otimes \rd p_i\Big ), \qquad\qquad \alpha=1,\ldots, m\ ,
 \end{equation}
where $l_{i}^{(\alpha)}= \lambda_i^{(\alpha)}(\boldsymbol{x})=\lambda_{n+i}^{(\alpha)}(\boldsymbol{x})$, $i=1,\ldots,n$.

We proved in \cite{TT2016prepr} that there exists a spectral decomposition of the tangent spaces $T_{\bs{x}}M= \bigoplus_{i=1}^{s}  \mathcal{D}_i(\bs{x})$  realized in terms of
 (generalized) eigenspaces $\mathcal{D}_i(\bs{x})$ of \textit{even rank}. Consequently, the number of the distinct eigenvalues of any Haantjes operator $\boldsymbol{K}$ of an $\omega\mathscr{H}$ structure is not greater than $n$. In particular, if the number of  distinct eigenvalues of an operator $\boldsymbol{K}\in \mathscr{H}$ is exactly $n$, the  operator will be said to be\textit{ maximal}. This is equivalent to require that the minimal polynomial of $\bs{K}$,
$m_{\boldsymbol{K}}(\boldsymbol{x},\lambda)=
\prod_{i=1}^n \Big(\lambda - l_i(\boldsymbol{x})\Big)^{\rho_i}$,
 has degree $m=n$.

Several other interesting results can be stated in the $\omega \mathscr{H}$ geometry. 
In particular, given a   $\omega \mathscr{H}$ manifold, the distributions $ \mathcal{D}_j$, $j=1,\ldots,s$ of each $\boldsymbol{K} \in \mathscr{H}$ are integrable and of even rank.  Besides, their  integral leaves are symplectic submanifolds of $M$ and are symplectically orthogonal to each other, namely $\omega (\mathcal{D}_j,\mathcal{D}_k)  =  \boldsymbol{0}$,  $ j\neq k$.

\subsection{Darboux--Haantjes coordinates for $\omega\mathscr{H}$ manifolds}

Assume that $(M,\omega,\mathscr{H})$ is an Abelian $\omega\mathscr{H}$ manifold of class $m$. Then, there exist local  charts in $U\subset M$  which are Darboux coordinates for $\omega$; besides, all of the Haantjes operators take simultaneously a block-diagonal form \cite{TT2016prepr}. Due to their twofold role, they will be called \textit{Darboux--Haantjes} (DH) coordinates.

\begin{corollary}
Given a semisimple Abelian $\omega\mathscr{H}$ manifold $(M, \omega ,\mathscr{H})$, on a set of  Darboux--Haantjes coordinates  each $\boldsymbol{K} \in \mathscr{H}$ takes the diagonal form \eqref{eq:Hdiag}.
\end{corollary}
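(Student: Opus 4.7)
The strategy is to start from the existence of Darboux--Haantjes coordinates already recalled from \cite{TT2016prepr}, in which $\omega$ assumes Darboux form and every $\boldsymbol{K}\in \mathscr{H}$ is simultaneously block-diagonal, and then to refine the block-diagonal form to a fully diagonal one of type \eqref{eq:Hdiag} by exploiting semisimplicity and the algebraic compatibility \eqref{eq:compOmH}.

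First I would observe that semisimplicity of $\mathscr{H}$ means that every $\boldsymbol{K}\in \mathscr{H}$ has Riesz index $\rho_i=1$ on each of its eigendistributions, so that the generalized eigendistributions \eqref{eq:DisL} reduce to proper ones $\mathcal{D}_i = \ker(\boldsymbol{K}-l_i\boldsymbol{I})$. By Proposition \ref{cor:Hframe}, the vanishing of the Haantjes torsion then ensures that each $\boldsymbol{K}$ admits an integrable eigen-frame on which it takes a genuinely diagonal (not merely block-diagonal) form. Since $\mathscr{H}$ is Abelian, its operators commute pairwise; hence they can be simultaneously diagonalized on a common integrable eigen-frame.

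The second step is to align this common eigen-frame with the natural frame of a Darboux--Haantjes chart. By the already established existence of DH coordinates, one can select a chart $(q^1,\ldots,q^n,p_1,\ldots,p_n)$ in which $\omega$ has Darboux form and the eigendistributions of every $\boldsymbol{K}\in \mathscr{H}$ are spanned by coordinate vector fields. In such coordinates each $\boldsymbol{K}_\alpha$ takes the diagonal form
\begin{equation*}
\boldsymbol{K}_\alpha=\sum_{i=1}^{n}\Bigl(\lambda_i^{(\alpha)}(\bs{x})\frac{\partial}{\partial q^i}\otimes \rd q^i + \lambda_{n+i}^{(\alpha)}(\bs{x})\frac{\partial}{\partial p_i}\otimes \rd p_i\Bigr).
\end{equation*}
Third, I would impose the compatibility condition \eqref{eq:compOmH}, namely $\boldsymbol{\Omega}\boldsymbol{K}_\alpha=\boldsymbol{K}_\alpha^T\boldsymbol{\Omega}$. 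In the chosen Darboux chart $\boldsymbol{\Omega}$ is the canonical symplectic matrix, and a direct matrix computation of both sides yields precisely the equalities $\lambda_i^{(\alpha)}(\bs{x})=\lambda_{n+i}^{(\alpha)}(\bs{x})$ for $i=1,\ldots,n$. Setting $l_i^{(\alpha)}(\bs{x}):=\lambda_i^{(\alpha)}(\bs{x})$ then reproduces exactly the normal form \eqref{eq:Hdiag}.

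The main obstacle is the consistency check in the second step: one must verify that the common simultaneous eigen-frame of the Abelian algebra $\mathscr{H}$ can be chosen to respect the Darboux pairing, i.e.\ that the coordinate vector fields $\partial/\partial q^i$ and $\partial/\partial p_i$ can be arranged so as to lie in paired eigenspaces of equal dimension. This relies on the two structural facts recalled from \cite{TT2016prepr}, namely that each eigenspace of a Haantjes operator on an $\omega \mathscr{H}$ manifold has \emph{even} rank and that distinct eigenspaces are \emph{symplectically orthogonal}. Together these properties permit the selection of a symplectic basis within each eigendistribution compatible with the Darboux splitting, thereby making the alignment with the $(q^i,p_i)$ coordinates possible and ensuring that the compatibility-imposed identification $\lambda_i^{(\alpha)}=\lambda_{n+i}^{(\alpha)}$ is consistent with the chosen diagonalization.
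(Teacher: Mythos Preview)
Your proposal is correct and follows precisely the line of reasoning implicit in the paper's exposition: the paper states this corollary without an explicit proof, treating it as an immediate consequence of the existence of Darboux--Haantjes coordinates (recalled from \cite{TT2016prepr}) combined with the fact, cited from \cite{TT2021}, that in the semisimple Abelian case the block-diagonal form refines to a genuinely diagonal one. Your three-step argument---simultaneous diagonalization via semisimplicity and commutativity, alignment with the DH chart, and the pairing $\lambda_i^{(\alpha)}=\lambda_{n+i}^{(\alpha)}$ from the compatibility \eqref{eq:compOmH}---simply spells out what the paper leaves to those references, and your handling of the ``obstacle'' via the even-rank and symplectic-orthogonality properties of the eigendistributions is exactly the structural input the paper invokes just before the corollary.
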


The relevance of Haantjes chains in the theory of  $\omega\mathscr{H}$ manifolds is due to the following
\begin{lemma} \label{lm:MHchainInv}
 Let $(M,\omega,\mathscr{H})$ be an Abelian $\omega\mathscr{H}$ manifold. Then the potential functions $H_\alpha\in C^{\infty}(M)$ of the Haantjes chain  generated by a distinguished function $H\in C^{\infty}(M)$ are in involution with $H$ and among each other, w.r.t. the Poisson bracket defined by the Poisson operator
 $\boldsymbol{P}=\boldsymbol{\Omega}^{-1}$.
\end{lemma}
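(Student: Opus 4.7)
The plan is to reduce both involution conditions to the two fundamental ingredients supplied by the hypothesis: algebraic compatibility of $\omega$ with the Haantjes operators, and commutativity of the algebra $\mathscr{H}$.

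First, I would translate the compatibility relation $\boldsymbol{\Omega}\,\boldsymbol{K} = \boldsymbol{K}^T\boldsymbol{\Omega}$ of Definition \ref{def:oHman} into its dual form on $T^*M$: conjugation by $\boldsymbol{P}=\boldsymbol{\Omega}^{-1}$ yields
\begin{equation}
\boldsymbol{K}\,\boldsymbol{P} = \boldsymbol{P}\,\boldsymbol{K}^T\ , \qquad \forall\,\boldsymbol{K}\in\mathscr{H}\ .
\end{equation}
With the sign convention $\{f,g\}=\langle df,\boldsymbol{P}\,dg\rangle$, the remaining ingredients are the pairing identity $\langle \boldsymbol{K}^T v, w\rangle = \langle v, \boldsymbol{K} w\rangle$ and the skew-symmetry of the bivector $\boldsymbol{P}$, namely $\langle v,\boldsymbol{P} w\rangle = -\langle w,\boldsymbol{P} v\rangle$.

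For the involution of $H$ with each $H_\alpha$, I would substitute $dH_\alpha = \boldsymbol{K}_\alpha^T dH$ and chain the three identities above:
\begin{equation}
\{H,H_\alpha\} = \langle dH,\boldsymbol{P}\,\boldsymbol{K}_\alpha^T dH\rangle = \langle dH,\boldsymbol{K}_\alpha\boldsymbol{P}\,dH\rangle = \langle \boldsymbol{K}_\alpha^T dH,\boldsymbol{P}\,dH\rangle = \langle dH_\alpha,\boldsymbol{P}\,dH\rangle = -\{H,H_\alpha\}\, ,
\end{equation}
so $\{H,H_\alpha\}=0$. For the mutual involution, I would apply the same manipulation on both entries:
\begin{equation}
\{H_\alpha,H_\beta\} = \langle \boldsymbol{K}_\alpha^T dH,\boldsymbol{P}\,\boldsymbol{K}_\beta^T dH\rangle = \langle \boldsymbol{K}_\alpha^T dH,\boldsymbol{K}_\beta\boldsymbol{P}\,dH\rangle = \langle (\boldsymbol{K}_\alpha\boldsymbol{K}_\beta)^T dH,\boldsymbol{P}\,dH\rangle\, .
\end{equation}
Here the Abelian hypothesis enters decisively: $\boldsymbol{K}_\alpha\boldsymbol{K}_\beta=\boldsymbol{K}_\beta\boldsymbol{K}_\alpha$ renders the right-hand side symmetric under $\alpha\leftrightarrow\beta$, while the left-hand side is antisymmetric by the very definition of the Poisson bracket; hence both vanish.

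I do not expect any real obstacle, as the argument is purely algebraic and each hypothesis is invoked exactly once: the compatibility between $\omega$ and $\mathscr{H}$ allows one to move $\boldsymbol{P}$ past the transposed operators, while the commutativity of $\mathscr{H}$ kills the surviving pairing by a symmetry/antisymmetry dichotomy. Note that the closure condition $d(\boldsymbol{K}_\alpha^T dH)=0$ from \eqref{eq:MHchain} does not enter the computation of the bracket itself; it only guarantees the local existence of the potentials $H_\alpha$ whose involution is being proved.
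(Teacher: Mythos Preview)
Your argument is correct: the compatibility relation $\boldsymbol{K}\boldsymbol{P}=\boldsymbol{P}\boldsymbol{K}^T$ together with the Abelian hypothesis makes $\{H_\alpha,H_\beta\}=\langle(\boldsymbol{K}_\alpha\boldsymbol{K}_\beta)^T dH,\boldsymbol{P}\,dH\rangle$ symmetric in $\alpha,\beta$, and the antisymmetry of the bracket forces it to vanish. The paper itself does not supply a proof of this lemma, stating it as a result from the earlier reference~\cite{TT2016prepr}; your purely algebraic derivation is exactly the standard route one would expect in that setting.
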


\subsection{Cyclic $\omega \mathscr{H}$  manifolds }  \label{sec:cOmH}
Cyclic Haantjes algebras are a simple and very interesting instance of Haantjes algebras: they are cyclically generated by  a suitable Haantjes operator $\boldsymbol{L}$, according the following 
\begin{definition} \label{def:CHa}
Let  $(M,  \mathscr{H})$ be  a  rank $\mathrm{m}$ Abelian Haantjes algebra. An  operator $\boldsymbol{L}$ whose minimal polynomial is of degree $\mathrm{h}\geq \mathrm{m}$  will be said to be  a generator of  $\mathscr{H}$  if
\begin{equation*}
 \mathscr{H}\subseteq\mathcal{L}(\boldsymbol{L})=\left < \boldsymbol{I} , \boldsymbol{L},\boldsymbol{L}^2, \ldots, \boldsymbol{L}^{h} \right > \ . 
\end{equation*}
\end{definition}

 In other words, all the operators of the algebra are of the form $\boldsymbol{K}_\alpha = p_\alpha (\boldsymbol{L})$, where $p_\alpha(\boldsymbol{x},\lambda)$ is a suitable polynomial in $\lambda$ of degree not greater than $(h-1)$ with coefficients in $C^{\infty}(M)$.   
A \emph{cyclic} $\omega  \mathscr{H}$ manifold is an $\omega  \mathscr{H}$ manifold endowed with a cyclic Haantjes algebra.
\par
\begin{lemma}\label{lm:gc}
 Let $(M,\omega,\mathscr{H})$ be an Abelian $\omega\mathscr{H}$ manifold of rank m. An operator 
 $\boldsymbol{L}$ belonging to $\mathscr{H}$ is a generator of  $\mathscr{H}$ if and only if  its minimal polynomial has   degree $h=m$ whenever $\boldsymbol{I} \in  \mathscr{H}$, and $h=m+1$ otherwise.
\end{lemma}
\begin{proof}
Firstly, let us consider the case $\boldsymbol{I} \in  \mathscr{H}$.  As   $\boldsymbol{L} \in  \mathscr{H}$,     
$\mathcal{L}(\boldsymbol{L})\subseteq \mathscr{H}$.   Let us assume that $\boldsymbol{L}$ is a generator of $\mathscr{H}$.  Then, by definition,  $ \mathscr{H}\subseteq\mathcal{L}(\boldsymbol{L})$ and the degree of the minimal polynomial of $\bs{L}$ is not smaller than $m$. Thus,      $\mathcal{L}(\boldsymbol{L})= \mathscr{H}$ and its   rank is equal to $m$. Therefore, the minimal polynomial of the  operator  $\bs{L}$  has degree $m$. Conversely, let us suppose that $\mathscr{H}$ contains an operator $\bs{L}$ whose   minimal polynomial  is of degree $m$. Then the rank of $\mathcal{L}(\boldsymbol{L})$ is $m$; therefore,  $\mathcal{L}(\boldsymbol{L})= \mathscr{H}$, so $\bs{L}$ is a generator of   $ \mathscr{H}$.
\par
In the case $\boldsymbol{I} \notin  \mathscr{H}$, consider the extended Haantjes algebra of rank $m+1$ obtained by adding the identity operator to $\mathscr{H}$. Then,    we are led to the previous case.    Thus, $\boldsymbol{L} \in \mathscr{H}$ is a generator of the extended algebra (and   therefore of $\mathscr{H}$) if and only if its minimal polynomial has degree $m+1$.

\end{proof}

\subsection{$\omega N$ and $\omega \mathscr{H}$  manifolds}

An interesting family of cyclic $\omega \mathscr{H}$ manifolds is represented by     $\omega N$ manifolds  \cite{MM1984,Mnoi}. In that context, cyclic Haantjes chains convert into Nijenhuis chains, as in \cite{FMT}, or  generalized Lenard chains as in  \cite{TT2012,TGalli12}. In the following, we shall discuss the relation between these two geometric structures. Precisely, we shall prove that in the Abelian, semisimple case, there is a one-to-one correspondence between $\omega\mathscr{H}$ manifolds and equivalence classes of $\omega N$ manifolds. This correspondence does not hold in the more general, non-semisimple case.

Let us recall that an $\omega N$ manifold $(M,\omega, \bs{N})$ is a symplectic manifold endowed with a  Nijenhuis operator $\bs{N}$ compatible with $\omega$, that is 
\begin{eqnarray}
\label{eq:OmNa}
 \bs{\Omega N}&=&\bs{N}^T \bs{\Omega}, \\
 \label{eq:OmNd}
 \rd ( \bs{\Omega N} ) &=&\bs{0}
\end{eqnarray}
where $\bs{\Omega}: TM \rightarrow T^*M$ is the linear map defined by $\bs{\Omega}:=\omega^\flat$.
Condition \eqref{eq:OmNa} is equivalent to the fact that the composed linear map $\bs{\Omega N}$ is also skew-symmetric, therefore the tensor field
$\omega_1$ induced  by the linear map $\omega_1 ^\flat:=\bs{\Omega N} $, is a $2$-form.  Thus, we  can  compute its  exterior derivative, which is imposed to vanish by condition \eqref{eq:OmNd}.
\par

First, let us consider an $\omega N$ manifold, and assume that its Nijenhuis operator $\boldsymbol{N}$ admits a minimal polynomial  of degree $m$. Then, the manifold $M$ has associated a cyclic $ \omega \mathscr{H}$ structure, given by
\begin{equation} \label{def:sOmH}
(M, \omega,  \boldsymbol{K}_1= \boldsymbol{I},  \boldsymbol{K}_2= \boldsymbol{N},\ldots,   , \boldsymbol{K}_{m}= \boldsymbol{N}^{m-1}) \ ,
\end{equation}
with a Haantjes algebra of rank  $m\leq dim(M)$.
In fact, each Nijenhuis operator $ \boldsymbol{N}$ is also a Haantjes operator; therefore, it generates the  cyclic Haantjes algebra $\mathcal{L}(\boldsymbol{N})$. Furthermore, the algebraic compatibility condition \eqref{eq:OmNa} assures that for all Haantjes operators
\begin{equation} \label{eq:KOmN}
\boldsymbol{K}=p_{\boldsymbol{K} }(\boldsymbol{x},\boldsymbol{N})=\sum_{i =0} ^{m-1}  a_i(\boldsymbol{x})\,\boldsymbol{N}^i ,
\end{equation}
condition iii) of Definition \ref{def:oHman} is fulfilled.
 \par

\vspace{2mm}

Conversely, let us construct an $\omega N$ structure starting from an Abelian, semisimple $\omega\mathscr{H}$ manifold. As a consequence of Propositions 37 and 38 in \cite{TT2016prepr},  Abelian semisimple $\omega\mathscr{H}$ manifolds are always cyclic ones. Besides, their generator can be chosen to be a Nijenhuis operator. 
In this case, DH coordinates coincide with DN coordinates and one  can  take, as usual in the $\omega N$ theory, the  eigenvalue fields $(\lambda_1, \ldots \lambda_n)$ of a Nijenhuis generator of the Haantjes algebra, assumed to be functionally independent,  as half of the set of  DH coordinates. Regarding the other half, we propose in Sec. \ref{sec:Procedure} a simple method for  computing them, based on the analysis developed in \cite{TT2016prepr} about the characteristic web of an $\omega\mathscr{H}$ manifold. 

Let us recall that the spectral decomposition of the tangent spaces associated to a Haantjes generator $\boldsymbol{L}$ of a semisimple Abelian Haantjes algebra is given by
\begin{equation} \label{eq:SpecD}
T_{\boldsymbol{x}}M= \bigoplus_{i=1}^n \mathcal{D}_i(\boldsymbol{x})=
\mathcal{D}_i (\boldsymbol{x}) \oplus \mathcal{E}_i(\boldsymbol{x})
\end{equation}
where 
\begin{equation} 
\mathcal{D}_i= \ker \left(\boldsymbol{L}-l_i\boldsymbol{I}\right), \qquad
 \mathcal{E}_i= Im\left(\boldsymbol{L}-l_i\boldsymbol{I}\right)= \bigoplus_{j=1, j\neq i}^n \mathcal{D}_j \qquad    i=1,\ldots,n.
\end{equation}
Correspondingly,  
\begin{equation}\label{eq:E0}
\mathcal{E}^\circ_i=  \ker \left(\boldsymbol{L}^T-l_i\boldsymbol{I}\right )
\end{equation}
 and 
the cotangent spaces decompose as 
\begin{equation}
T^*_{\boldsymbol{x}}M= \bigoplus_{i=1}^n \mathcal{E}^\circ_i(\boldsymbol{x}) \ .
\end{equation}
The ranks of the eigendistributions $\mathcal{D}_i$,  of the characteristic  distributions  $\mathcal{E}_i$ and  of their annihilators $ \mathcal{E}^\circ_i$, $i=1,\ldots,n$ are  $2$, $2n-2$, $2$, respectively . 
\par
As a further step, we can prove that given an Abelian semisimple $\omega \mathscr{H}$ manifold of class $n$, there exists an $\omega N$ manifold associated. To this aim, for the sake of clarity, we specialize Proposition 38, proved in  \cite{TT2016prepr}, to the case  $m=n$.
\begin{proposition} \label{th:HgDH}
Let  $(M,\omega,\mathscr{H})$ be an Abelian $2n$-dimensional semisimple $\omega\mathscr{H}$ manifold of class $n$. Let us consider the spectral decomposition \eqref{eq:SpecD}  and a Darboux-Haantjes chart
 $\{ U,(q^i,p_i) \}$, $i=1,\ldots , n$,  adapted to the decomposition \eqref{eq:SpecD}, namely
\begin{equation}\label{eq:HchartK}
\mathcal{D}_{i}=\left\langle\frac{\partial}{\partial q^i},\frac{\partial}{\partial p_i}\right\rangle \ .
\end{equation}
Then, each operator defined by
 \begin{equation}\label{eq:Lgen}
\boldsymbol{L}=\sum_{i=1}^n \lambda_{i}(\boldsymbol{q,p})\bigg(\frac{\partial}{\partial q^{i}}\otimes \rd q^{i}+\frac{\partial}{\partial p_{i }}\otimes \rd p_{i }\bigg)
 \end{equation}
 is a generator of $\mathscr{H}$, provided that $\{\lambda_{1}(\boldsymbol{q, p}), \ldots , \lambda_{n}(\boldsymbol{q, p})\}$ are arbitrary,
pointwise distinct smooth functions. Therefore, every operator $\boldsymbol{K}\in \mathscr{H}$ can be written in the form
 \begin{equation}\label{eq:KLagr}
 \boldsymbol{K}=p_K(\bs{x},\bs{L})=\sum _{i=1}^n l_{i } \frac{\Pi_{j\neq i}(\boldsymbol{L}-\lambda_j \boldsymbol{I})}{\Pi_{j\neq i}(\lambda_i-\lambda_j )} \ , 
 \end{equation}
where $l_i=l_i(\boldsymbol{q, p})$ are the eigenvalue fields of $\boldsymbol{K}$.
In particular, if
\begin{equation}\label{eq:64}
\lambda_i(\boldsymbol{q,p})=\lambda_i(q^{i},p_{i}) \qquad\qquad i=1,\ldots, n \ ,
\end{equation}
each generator $\boldsymbol{L}$ is a Nijenhuis operator. Each of these  operators endows the manifold M with an  $\omega N$ structure, for which the DH chart $\{ U,(q^i,p_i) \}$, $i=1,\ldots , n$ is a DN chart.
\par
\end{proposition}

\begin{proof}
 The first part of the Theorem has been proved in \cite{TT2016prepr}.
 \par
  Let us prove now that both conditions \eqref{eq:OmNa} and \eqref{eq:OmNd} are fulfilled by a Nijenhuis generator, that is a generator \eqref{eq:Lgen} of the algebra $\mathscr{H}$ which satisfies condition \eqref{eq:64}.  We denote  such Nijenhuis generator by $\bs{N}$.
Condition \eqref{eq:OmNa} is satisfied as  $\bs{N}$  takes a diagonal form in the DH coordinates $(\boldsymbol{q,p})$.
Concerning condition \eqref{eq:OmNd}, it suffices to observe that in any set of  DH coordinates, the $2$-form $\bs{\Omega N}$ takes the local expression
\begin{equation}
\bs{\Omega N}=
\sum_{i=1}^n \lambda_i(q^{i},p_{i})\, \rd p_{i} \wedge   \rd q^{i} \ .
 \end{equation}
Therefore, its exterior derivative vanishes, as can be computed  by means of a direct calculation.
\end{proof}
We can now make precise the meaning of the correspondence between $\omega \mathscr{H}$ manifolds and equivalence classes of $\omega N$ manifolds.

Given a semisimple $\omega \mathscr{H}$ manifold, it admits infinitely many Nijenhuis generators. However, they are all related. Indeed, the corresponding $\omega N$ structures, although \textit{a priori} different, possess common DN coordinates which in turn are also DH coordinates for the algebra $\omega \mathscr{H}$. We shall say that two semisimple $\omega N$ structures are equivalent if they possess common DN coordinates. It is easy to show that this property  defines an equivalence relation of $\omega N$ manifolds. Thus, given an $\omega \mathscr{H}$ manifold, there exists associated an equivalence class of $\omega N$ manifolds. The vice versa is obvious, since all Nijenhuis operators of the class by construction generate the same cyclic $\omega \mathscr{H}$ structure.

More generally, given an equivalence class of Abelian, semisimple $\omega N$ manifolds, one can associate a unique $\omega \mathscr{H}$ manifold. 
Indeed, the powers of the Nijenhuis operators of the class all diagonalize in the same DN chart. Therefore, from the $C^{\infty}(M)$ modules generated by them one can define an algebra of diagonal Haantjes operators compatible with $\omega$, i.e., an $\omega \mathscr{H}$ manifold. Obviously, the DN chart is a DH chart for the $\omega \mathscr{H}$ manifold.


\subsection{A procedure to construct DH coordinates} \label{sec:Procedure} 

We wish to envisage a strategy for the explicit construction of families of Darboux-Haantjes coordinates.

We shall assume that the  distributions  $\mathcal{D}_j$, $\mathcal{E}_i$, $\mathcal{E}_i^\circ$  have constant rank. Thus,  they are  integrable by virtue of the Haantjes theorem and fulfill the properties summed up  in the following 

\begin{proposition}\label{pr:FD}
Given a   semisimple $\omega \mathscr{H}$ manifold of class $n$, the eigen-distributions $ \mathcal{D}_j$ of each generator $\boldsymbol{L}$ are integrable and of  rank $2$.  Their  integral leaves are two-dimensional symplectic submanifolds of $M$ and are symplectically orthogonal to each other:
\begin{eqnarray}
\mathbf{\Omega}(\mathcal{D}_j)&=&\mathcal{E}_j^\circ \ \Leftrightarrow
\mathcal{D}_j = \boldsymbol{P}(\mathcal{E}_j^\circ)= \mathcal{E}_j^\perp  \ , \label{eq:PE0} \\
\label{eq:OmE}
\mathbf{\Omega}(\mathcal{E}_j)&=&\mathcal{D}_j^\circ\ \Leftrightarrow
\mathcal{E}_j = \boldsymbol{P}(\mathcal{D}_j^\circ)= \mathcal{D}_j^\perp \ , \\  
\omega({\mathcal{D}_j,\mathcal{D}_j})&=&symplectic,    \label{eq:Dsym} \\
\omega (\mathcal{D}_j,\mathcal{D}_k) & = & \boldsymbol{0}  \label{eq:Dsyo} \qquad\qquad j\neq k \ .
\end{eqnarray}
Here $\mathcal{E}_j^\perp$ and $\mathcal{D}_j^\perp$ are the symplectic orthogonal distributions of $\mathcal{E}_j$ and $\mathcal{D}_j$, respectively.
\end{proposition}
This result is a consequence of  Propositions 23 and 24 of \cite{TT2016prepr}.
\par
Under the same assumptions of Proposition \ref{pr:FD},  DH coordinates  can be determined  as pairs of  characteristic functions of the Haantjes web of the generator $\boldsymbol{L}$, namely functions $(x_i,y_i)$ which are constant on the characteristic eigendistributions $\mathcal{E}_i$. These functions are the same as the potential functions of the exact eigen-forms of $\boldsymbol{L}^T$ (due to eq. \eqref{eq:E0}). Also,  they fulfill the involution relations
\begin{eqnarray}
\{ x_i ,x_j\}&=0 \ , \quad \{ y_i ,y_j\}=0 \ , \qquad &i=1,\ldots,n \\
\{x_i,y_j\}&=0 \qquad\qquad \qquad\qquad\qquad &i\neq j=1,\ldots,n
\end{eqnarray}
as a consequence of properties  \eqref{eq:PE0} and \eqref{eq:Dsyo}. Therefore, our problem is to find pairs of characteristic functions $(x_i,y_i)$ that, in addition, are canonically conjugated:
$$
\{x_i,y_i\}=1, \qquad\qquad i=1,\ldots,n \ .
$$
They represent Darboux coordinates for each of bidimensional symplectic distributions $\mathcal{E}^\circ_i$.  We present now an effective procedure for the determination of these coordinates, which can be summed up in the following steps.

\par
1) We determine a basis of 1-forms $\{ \gamma_i,  \delta_i\}$ for  the characteristic co-distributions  $\mathcal{E}_i^\circ$  of the Haantjes generator $\bs{L}$.  \\
2)   For each $i=1,\ldots,n$, we search for an exact 1-form $\alpha_i=\rd x_i \in \mathcal{E}_i^\circ$,  by representing it as  
\begin{equation} \label{eq:momenti}
\alpha_i = f_i \,\gamma_i +g_i\, \delta_i\ ,
\end{equation}
and by requiring that    $\rd \alpha_i=0$. This step can be managed with a suitable ansatz about  the form of the two integrating factors $f_i$ and $g_i$.
\\
3)  We find the potential function $x_i$ of the exact 1-form $\alpha_i$. The potential functions so determined are automatically in involution w.r.t. the Poisson bracket induced by $\omega$.
\\
4)  We search for another exact 1-form $\beta_i=\rd  y_i \in \mathcal{E}_i^\circ$, linearly independent of  $\rd x_i$. To this aim,  we can  represent $\beta_i$ as  
\begin{equation} \label{eq:momenti}
\beta_i =h_i \, \rd x_i+ r_i \,\gamma_i \ ,
\end{equation}
where we have assumed, without loss of generality, that $\gamma_i \in  \mathcal{E}_i^\circ$ is linearly independent of $ \rd x_i$. \\
5) By requiring that the potential function $y_i$  is canonically conjugated with $x_i$, that is 
$$
1=\{ x_i, y_i \}=\langle \rd x_i, \boldsymbol{P} \,\rd y_i \rangle = 
h_i\ \cancel{\langle \rd x_i, \boldsymbol{P} \,\rd x_i \rangle}+
r_i\langle \rd x_i, \boldsymbol{P} \, \gamma_i \rangle \ ,
$$
we get the normalizing factor 
$$
r_i= \frac{1}{\langle \rd x_i, \boldsymbol{P} \gamma_i \rangle}.
$$
6) Substituting $r_i$ into eq. \eqref{eq:momenti} and  imposing that the 1-form   $\beta_i$ is  a closed 1-form, we find the last  normalizing factor $h_i$. \\
7) Finally, we find the potential function $y_i$ of the exact 1-form $\beta_i$.
\par \vs
The previous  task  is much easier if one already knows half  of the DH coordinates, say $\{x_i\}_{1 \leq i\leq n}$. In this case, one skips the second and third step of the previous  procedure. For instance, this occurs if one has   a Haantjes generator $\boldsymbol{L}$ which is also Nijenhuis operator. In fact, in this case, the eigenvalues fields of $\boldsymbol{L}$ are just  characteristic functions of the Haantjes web. Then, once the eigenvalues of $\boldsymbol{L}$ have been computed by algebraic methods, the conjugated momenta   $\{y_i\}_{1 \leq i\leq n}$ can be determined by means of the steps 1), 4), 5), 6), 7). We shall 
apply this shorter procedure in Section \ref{sec:Drach}.
\vs

\section{Separation of variables in $\omega \mathscr{H}$ manifolds} \label{sec:separation}

\subsection{Main theorem}
\noi The next theorem represents our main result concerning the existence of separation variables in the theory  of $\omega \mathscr{H}$ manifolds. 
\begin{theorem} [Jacobi-Haantjes] \label{th:SoVgLc}
Let $M$ be an Abelian  semisimple $\omega \mathscr{H}$ manifold of class $n$ and $\{H_1,H_2,\ldots,H_n\}$ be a set of $C^{\infty}(M)$ functions  belonging to a    Haantjes chain generated by a function $H\in C^{\infty}(M)$ via a basis of operators $\{\boldsymbol{K}_1,\ldots,\boldsymbol{K}_{n}\}\in \mathscr{H}$. 
Then, each set $(\boldsymbol{q},\boldsymbol{p})$  of {\rm DH} coordinates
provides us with separation variables for the Hamilton--Jacobi equation associated with each function $H_j$.
 \par
 Conversely, if $M$ is a symplectic manifold and $\{H_1,H_2,\ldots,H_n\}$ are  $n$ independent,  $C^{\infty}(M)$ functions separable in a set of Darboux coordinates $(\boldsymbol{q},\boldsymbol{p})$,
 then they belong to the Haantjes  chain generated by the operators 
 \begin{equation} \label{eq:LSoV}
\boldsymbol{K}_\alpha=\sum _{i=1}^n \frac{\frac{\partial H_{\alpha}}{\partial p_i}}{ \frac{\partial H}{\partial p_i}}\bigg (\frac{\partial}{\partial q^i}\otimes \rd q_i +\frac{\partial}{\partial p_i}\otimes \rd p_i \bigg )\qquad \alpha=1,\ldots,n \ ,
\end{equation}
where $H$ is any of the functions $\{H_1, \ldots, H_n\}$, with $\frac{\partial H}{\partial p_i}\neq 0$, $i=1,\ldots,n$. These operators generate a semisimple $\omega \mathscr{H}$ structure on $M$.
\end{theorem}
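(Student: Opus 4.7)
The strategy for both directions is to work in a Darboux chart in which the relevant Haantjes operators diagonalize simultaneously, so that the abstract identities reduce to pointwise identities on partial derivatives.

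For the forward implication, I would fix a set of DH coordinates $(\boldsymbol{q},\boldsymbol{p})$, whose existence is guaranteed by the semisimplicity of the $\omega\mathscr{H}$ structure. In this chart every $\boldsymbol{K}_\alpha$ takes the diagonal form \eqref{eq:Hdiag} with eigenvalues $l_i^{(\alpha)}$, so the chain relation $\rd H_\alpha = \boldsymbol{K}_\alpha^T \rd H$ is equivalent to the componentwise identities $\partial_{q^k}H_\alpha = l_k^{(\alpha)}\,\partial_{q^k}H$ and $\partial_{p_k}H_\alpha = l_k^{(\alpha)}\,\partial_{p_k}H$. Plugging these into Benenti's sectional bracket \eqref{eq:SI} gives
\begin{equation*}
\{H_\alpha,H_\beta\}_{|k} = l_k^{(\alpha)}\,l_k^{(\beta)}\bigl(\partial_{q^k}H\,\partial_{p_k}H - \partial_{p_k}H\,\partial_{q^k}H\bigr)=0
\end{equation*}
for every $k$, so the $H_\alpha$'s lie in separable involution. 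Benenti's theorem then delivers the additive separation of each associated Hamilton-Jacobi equation.

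For the converse, I would take $(\boldsymbol{q},\boldsymbol{p})$ to be the Darboux chart in which the $H_\alpha$'s are separable, define $\boldsymbol{K}_\alpha$ by \eqref{eq:LSoV} and observe that these operators are diagonal in that chart, hence Haantjes by the standard remark that any operator in diagonal form has vanishing Haantjes torsion. The fact that the same eigenvalue $l_i^{(\alpha)}$ occurs on both the $\partial_{q^i}$ and $\partial_{p_i}$ slots matches the ansatz \eqref{eq:Hdiag} and directly yields the compatibility \eqref{eq:compOmH} with $\boldsymbol{\Omega}$. Any $C^\infty(M)$-linear combination or composition of the $\boldsymbol{K}_\alpha$'s remains diagonal in the same chart, so the module and ring conditions of Definition \ref{def:HM} are automatic, producing an Abelian semisimple Haantjes algebra. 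To verify the chain condition $\boldsymbol{K}_\alpha^T \rd H = \rd H_\alpha$, note that the $\rd p_i$ coefficient equals $\partial_{p_i} H_\alpha$ by the very definition of $l_i^{(\alpha)}$, while the $\rd q^i$ coefficient matches $\partial_{q^i} H_\alpha$ precisely by virtue of the separable involution identity \eqref{eq:SI} applied to the pair $(H_\alpha, H)$, which is built into the separability hypothesis.

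The delicate step will be verifying that $\{\boldsymbol{K}_1,\ldots,\boldsymbol{K}_n\}$ generates a \emph{free} module of rank $n$, rather than merely a collection of $n$ Haantjes operators. Here I would invoke the functional independence of $\{H_1,\ldots,H_n\}$ together with the hypothesis $\partial H/\partial p_i \neq 0$ for every $i$: these conditions, via the Jacobi-Sklyanin non-degeneracy \eqref{eq:Sk}, ensure that the eigenvalue matrix $[l_i^{(\alpha)}]$ has generically maximal rank, which in turn forces the $C^\infty(M)$-linear independence of the $\boldsymbol{K}_\alpha$'s. Semisimplicity and the remaining $\omega\mathscr{H}$ axioms then follow at once from the explicit diagonal presentation, completing the construction of the structure asserted by the theorem.
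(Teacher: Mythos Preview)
Your argument is correct and follows essentially the same route as the paper: in the forward direction you diagonalize the $\boldsymbol{K}_\alpha$ in DH coordinates, read off the componentwise identities on $\partial_{q^k}H_\alpha$ and $\partial_{p_k}H_\alpha$, and feed them into Benenti's sectional bracket exactly as the paper does; in the converse you define the $\boldsymbol{K}_\alpha$ diagonally, observe that this forces vanishing Haantjes torsion and symplectic compatibility, and use the separable-involution identity \eqref{eq:SI} to reconcile the overdetermined pair of equations for the $q$- and $p$-components of the chain relation, which is precisely the paper's argument. The one place you go beyond the paper is your final paragraph on the freeness of the module: the paper simply asserts that the $\boldsymbol{K}_\alpha$ generate a rank-$n$ Haantjes algebra without justification, whereas you attempt to derive it from the functional independence of the $H_\alpha$; note, however, that invoking the Jacobi--Sklyanin condition \eqref{eq:Sk} is somewhat oblique here, since the relevant object is the eigenvalue matrix $[\partial_{p_i}H_\alpha/\partial_{p_i}H]$ rather than $[\partial\Phi_i/\partial H_j]$, and a cleaner route is simply to observe that independence of the $\rd H_\alpha$ forces pointwise independence of the eigenvalue rows.
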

\begin{proof}
Theorem 25 of \cite{TT2016prepr} guarantees the existence of  sets of DH coordinates for a semisimple $\omega \mathscr{H}$ manifold. Therefore, it suffices to show that the functions $H_j$ in
such coordinates are in separable involution, according to eq. \eqref{eq:SI}.
 To this aim, let us note that, due  to the
diagonal form of  $\boldsymbol{K}_\alpha^T$ in a  DH local chart, the relations
\begin{eqnarray} \label{eq:dHDNgLc}
\frac{\partial H_j}{\partial q^k}&=&l^{(j)}_k\frac{\partial H}{\partial
q^k}\ ,\qquad\qquad  j,k=1,\ldots, n \ , \\
\frac{\partial H_j}{\partial p_k}&=&l^{(j)}_k\frac{\partial H}{\partial p_k} \ ,\qquad\qquad  j,k=1,\ldots, n,
\end{eqnarray}
hold. Here $l^{(j)}_k$ denotes the $k$-th eigenvalue of the  Haantjes operator $\boldsymbol{K}_{j}^T$.
Therefore,
$$\{H_i,H_j\}_{\vert k}=l^{(i)}_k\,\frac{\partial H}{\partial
q^k}\,l^{(j)}_k\,\frac{\partial H}{\partial p_k}
-l^{(j)}_k\,\frac{\partial H}{\partial
q^k}\,l^{(i)}_k\,\frac{\partial H}{\partial p_k} = 0 \ .$$

%
In order to prove the converse statement, without loss of generality we can assume that $ { \frac{\partial H}{\partial p_i}}\neq 0 $ \textit{for} $ i=1,\ldots,n$.
 The operators \eqref{eq:LSoV}, being diagonal in the separated coordinates, are Haantjes operators. Also, they commute with each other and generate an Abelian, semisimple Haantjes algebra $\mathscr{H}$. The algebraic compatibility conditions \eqref{eq:compOmH} of the operators \eqref{eq:LSoV} with the symplectic form are equivalent to the further conditions
\beq
 l_{n+i}^{(\alpha)}=l_i^{(\alpha)} \qquad i=1,\ldots, n \ .
\eeq
Thus, the Haantjes operators \eqref{eq:LSoV} must possess at least double eigenvalues.

Finally, we impose that the integrals of motion $\{ H_1,H_2, \ldots,H_n \}$ form a Haantjes chain, which will be  generated by any of these functions, denoted by $H$. Since $\boldsymbol{K}_\alpha$ $(\al=1,\ldots,n)$ is diagonal in the $(\boldsymbol{q},\boldsymbol{p})$ variables,  such conditions are equivalent, for each $\alpha$, to the overdetermined system of $2n$ algebraic equations in the $n$ functions $l_i^{(\alpha)}$
\begin{eqnarray}
\label{eq:lq}
l^{(\alpha)}_i \frac{\partial H}{\partial q^i}&= & \frac{\partial H_{\alpha}}{\partial q^i} \ ,\\
\label{eq:lp}
l^{(\alpha)}_i \frac{\partial H}{\partial p_i}&= & \frac{\partial H_{\alpha}}{\partial p_i} \ ,
\end{eqnarray}
 $ i=1,\ldots,n$. In fact, the above  equations are compatible, because the Benenti conditions \eqref{eq:SI}  of separate involution ensure that
$$
\frac{\partial H}{\partial q^i}
\frac{\partial H_{\alpha}}{\partial p_i}=\frac{\partial H}{\partial p_i}
\frac{\partial H_{\alpha}}{\partial q^i} \ , \quad 1\le i \le n \ .
$$
Consequently, Eqs. \eqref{eq:lp}   provide us with the unique solution \eqref{eq:LSoV}.
\end{proof}
\begin{corollary}
If $M$ is a symplectic manifold and $\{H_1,H_2,\ldots,H_n\}$ are  $n$ independent,  $C^{\infty}(M)$ functions separable in a set of Darboux coordinates $(\boldsymbol{q},\boldsymbol{p})$, the Haantjes operators \eqref{eq:LSoV} can be written as 
\begin{equation}\label{eq:KLagr}
 \boldsymbol{K}_\alpha=\sum _{i=1}^n \frac{\frac{\partial H_{\alpha}}{\partial p_i}}{ \frac{\partial H}{\partial p_i}} \,
 \frac{\Pi_{j\neq i}(\boldsymbol{N}-\lambda_j \boldsymbol{I})}{\Pi_{j\neq i}(\lambda_i-\lambda_j )} 
 \ ,
 \end{equation}
 where $\bs{N}$ is the operator defined by
 \begin{equation}\label{eq:Ngen}
\boldsymbol{N}=\sum_{i=1}^n \lambda_{i}(q^i, p_i)\bigg(\frac{\partial}{\partial q^{i}}\otimes \rd q^{i}+\frac{\partial}{\partial p_{i }}\otimes \rd p_{i }\bigg)
 \end{equation}
and $\{\lambda_{i}(q^i, p_i)\}_{1\leq i\leq n}$ are arbitrary, pointwise distinct smooth functions on $M$. The operator $\boldsymbol{N}$ is a Nijenhuis operator compatible with $\omega$ and, therefore, a generator of a cyclic $\omega \mathscr{H}$ structure on $M$, as defined by  \eqref{def:sOmH}.
\end{corollary}
\begin{remark}
Given $n$ arbitrary smooth functions $\{ H_{1},\ldots, H_{n}\}$ on a $2n$-dimensional manifold $M$, it is always possible to determine $n$ diagonal Haantjes operators $\bs{K}_{i}$ which satisfy the chain equations $\bs{K}^{T}_i dH=dH_i$, where $H$ is any of the previous functions. However, if $M$ is a symplectic manifold, the compatibility condition \eqref{eq:compOmH} of the Haantjes operators $\bs{K}_i$ with the symplectic structure defined on $M$ imposes $n$ additional constraints on the eigenvalues of these operators. Thus, the systems of $2n$ algebraic equations \eqref{eq:lq} and \eqref{eq:lp} can be solved if and only if $n$ of these equations are automatically satisfied. This requirement is equivalent to the Benenti conditions   \eqref{eq:SI}. Consequently, given an integrable system, the existence of a Haantjes chain  with semisimple operators (and consequently of a $\omega\mathscr{H}$ structure), is not at all a trivial property. The case of the superintegrable Post-Winternitz system is illustrative of this aspect: no separation variables are known for this system, and the unique known $\omega \mathscr{H}$ structures are \textit{non-semisimple ones}. 
\end{remark}

\subsection{A general procedure} \label{ss:GP}

In order to determine the $\omega \mathscr{H}$ structures admitted by a separable system, we need to construct the Haantjes chains associated with it. Precisely, we wish to solve the chain equations
\begin{equation} \label{eq:MHchain2}
\rd (\boldsymbol{K}^T_\alpha \,\rd H )=\boldsymbol{0}  \quad\qquad \alpha=1,\ldots,m 
\end{equation}
for suitable Haantjes operators $\boldsymbol{K}^T_{\alpha}$. Generally speaking, these equations do not admit a unique solution. Notice that the operators $\boldsymbol{K}^T_{\alpha}$ we are interested in must be compatible with the symplectic form $\omega$ (see eq. \eqref{eq:compOmH}).
 The most general operator  $\bs{M}$ compatible with the symplectic form, in Darboux coordinates $(\bs{q}, \bs{p})$ reads

\begin{equation} \label{eq:gen}
\bs{M}=\left[\begin{array}{c|c}\bs{A}(\bs{q},\bs{p}) & \bs{B(\bs{q},\bs{p})} \\
\hline \bs{C}(\bs{q},\bs{p}) & \bs{A}^{T}(\bs{q},\bs{p})
\end{array}\right]\ , \quad \bs{B}+\bs{B}^{T}= \bs{0} \ , \quad \bs{C}+\bs{C}^{T}= \bs{0} \ ,
\end{equation}

where $\bs{A}$, $\bs{B}$, $\bs{C}$ are $n\times n$ matrices with coefficients smoothly depending on the Darboux coordinates. However, the operator \eqref{eq:gen} in general is not a Haantjes operator, unless some specific choices for its arbitrary functions are made. Therefore, our task is to determine solutions of eqs. \eqref{eq:MHchain2} in the class of Haantjes operators of the form \eqref{eq:gen}. For $n=2$, the operators of the family \eqref{eq:gen} can depend on up to 6 arbitrary functions; for $n=3$, up to $15$ arbitrary functions are admissible.

The procedure for the determination of the $\omega \mathscr{H}$ structure associated with an $n$-dimensional integrable Hamiltonian system can be summarized in the following steps.

1) Determine the operators $\boldsymbol{K}$ of the family \eqref{eq:gen} that solve eqs. \eqref{eq:MHchain2}. \\
2) Among the solutions found, choose the operators satisfying the vanishing  condition for their Haantjes torsion: 
\[
\mathcal{H}_{\boldsymbol{K}}(X,Y)= \bs{0}, \qquad \forall~X, Y \in TM  \ .
\]
3) Find all the semisimple Abelian $\omega \mathscr{H}$ structures admitted by the system and their generators.

4) Find the DH coordinates associated with each $\omega \mathscr{H}$ structure. 

\vs

This procedure is completely general. 

\vs

\noi An alternative strategy is to  use a suitable ansatz for the form of the operators $\bs{K}$ that are supposed to solve eqs. \eqref{eq:MHchain2}. Precisely, one can  start from a sub-family of operators depending on arbitrary functions,  which are both  compatible with the symplectic form  and have vanishing Haantjes torsion. In this way, the requirement of step (2) is already fulfilled. Then, one can try to fix the arbitrary functions available by solving eqs. \eqref{eq:MHchain2} (step 1). This approach is less general, since it presupposes the  determination \textit{a priori} of special subclasses of operators of the form \eqref{eq:gen} which are Haantjes's as well. This task, which in general is computationally nontrivial, is affordable in the case $n=2$; consequently, this procedure has been adopted, for instance, in Section \ref{sec:anisotropic}. 

%

\section{Separation of variables for a  Drach-Holt type system} \label{sec:Drach}


The approach proposed in this paper offers an effective  procedure  to construct algorithmically separating variables admitted by  Hamiltonian integrable systems. As a paradigmatic example, we shall study the case of a system showing an irrational dependence on its coordinates. Precisely, we shall consider a three-parametric deformation of the Holt potential, that has been introduced in \cite{CCR}:
\begin{equation} 
H_1=\frac{1}{2}(p_x^2+p_y^2)+k_1 \frac{4x^2+3y^2}{y^{2/3}}+
k_2\frac {x}{y^{2/3}}+\frac{k_3}{y^{2/3}}. \label{CCR}
\end{equation}
It is integrable in the manifold $M=T^*(E^2 \setminus \{y=0\})$, with a third-order integral
\begin{eqnarray} \label{eqDH2}
 H_2&=&2p_x^3+3p_xp_y^2+
12k_1\bigg(\frac{2x^2-3y^2}{y^{2/3}}+ 6 x y^{1/3}p_y \bigg)+ \\
\nonumber &+&k_2 \bigg( \frac{6x}{y^{2/3}}p_x+9 y^{1/3}p_y\bigg)+  6\frac{k_3}{{y^{2/3}} }p_x \ .
\end{eqnarray}

\noindent When $k_1\rightarrow 0$, the Hamiltonian $H_1$ converts into the (generalized)  Post-Winternitz (gPW) superintegrable potential \cite{PW2011}. A crucial aspect is that $H_1$  in the literature has been regarded as an example of \textit{nonseparable system}. Indeed, it is not of  St\"ackel type; thus is not separable by means of an extended--point transformation. A natural question is whether there is a full canonical transformation (in general difficult to find)   redeeming its separability. Our theory of integrability \textit{\`a la Haantjes}  enables us to solve this problem, since it provides us with a set of DH separation coordinates, according to Theorem \ref{th:SoVgLc}.
\begin{remark} The set of conditions we obtain provides us with an \textit{overdetermined} system of equations.  In general, solving an overdetermined system is computationally affordable; however,  an interesting open question, theoretically relevant, is to find a minimal set of \textit{independent} conditions ensuring separability. 
\end{remark}
\subsection{The $\omega \mathscr{H}$ structure}
Following the procedure of Section  \ref{ss:GP}
we get  a  Haantjes operator   linear in the momenta. It  reads

\begin{equation} \label{eq:LDH}
\boldsymbol{K}^{(DH)}=3
\left[
\begin{array}{cc|cc}
2p_x &  p_y & 0 & 3y  \\
 0& 2p_x & -3y& 0 \\
 \hline
0 & -24 k_1 y^{1/3} & 2p_x &   0\\
24 k_1 y^{1/3} & 0 &  p_y &   2p_x
\end{array}
\right ].
\end{equation}

Due to Lemma \ref{lm:gc}, the Haantjes  operator \eqref{eq:LDH} is a generator of the algebra $ \langle \bs{I}, \bs{K}^{(DH)} \rangle$ since it is semisimple and its minimal polynomial $m_{\boldsymbol{K}}(z)=z^2-b_1(\boldsymbol{x}) z-b_2(\boldsymbol{x})$
 is of degree two. Moreover, its eigenvalues $l_1(\boldsymbol{x})$ and $l_2(\boldsymbol{x})$ fulfil the following recursion relations
\begin{equation} \label{eq:autocv}
(\boldsymbol{K}^{(DH)})^{T} \rd l_1=l_2 \,\rd l_1 \ ,\qquad (\boldsymbol{K}^{(DH)})^{T}\, \rd l_2=l_1 \rd l_2 \ ,
\end{equation}
 that is, they are potential functions of exact eigen 1-forms of $(\boldsymbol{K}^{(DH)})^{T}$. Furthermore, the 
recursion relations \eqref{eq:autocv}  imply the remarkable relation 
\begin{equation} \label{eq:Kowc}
(\boldsymbol{K}^{(DH)})^{T} \rd \left(\frac{1}{2} \,Trace(\boldsymbol{K}^{(DH)})\right )= \rd \sqrt{\det(\boldsymbol{K}^{(DH)}})\ ,
\end{equation}
which  reminds us  the main property of the Kowaleskaya operator appearing in the recent new theory of SoV due to F. Magri \cite{Mp}.  Equation 
\eqref{eq:Kowc} implies that the two functions $\frac{1}{2} \,Trace(\boldsymbol{K}^{(DH)})=b_1(\boldsymbol{x})$ and $\sqrt{\det(\boldsymbol{K}^{(DH)}})=-b_2(\boldsymbol{x})$ are potential functions of exact 1-forms belonging  to a Haantjes chain. Therefore, they are in involution with respect to the Poisson bracket induced by the symplectic form $\omega$, thanks to Lemma \ref{lm:MHchainInv}.  Consequently, by rewriting eq. \eqref{eq:Kowc} in terms of the eigenvalue fields of  $\boldsymbol{K}^{(DH)}$ it is easy to prove that 
\begin{equation} \label{eq:l12PB}
\{ l_1(\boldsymbol{x}),l_2(\boldsymbol{x})\}=0 \ .
\end{equation}



\begin{remark} \label{rem:Hmod2gl}
 Let $\boldsymbol{L}$  be an operator expressed as an affine function of a Haantjes operator $ \boldsymbol{K}$:
 \begin{equation} \label{eq:NK}
\boldsymbol{L}=f \boldsymbol{I}+g\boldsymbol{K} \ ,
 \end{equation}
 where $f$ and $g$ are arbitrary smooth functions, with $g$ nowhere vanishing. Thus, the operator $\boldsymbol{L}$ is  a Haantjes operator as a consequence of identity \eqref{eq:LtorsionLocal}. From Proposition 1.2.4 of \cite{Rob} it follows that the minimal polynomials of the two operators $\boldsymbol{L}$ and $\boldsymbol{K}$ are related by 
 $$
 m_{\boldsymbol{L}}(z)=g^{deg(m_K(z))} m_{\boldsymbol{K}} \big(\frac{1}{g}(z-f)\big) \ ;
  $$
therefore, they have the same number of irreducible factors with the same multiplicity.
 Therefore, the eigenvalues $\lambda_i$ of $\boldsymbol{L}$, the eigenvalues $l_i$ of $\boldsymbol{K}$ and their eigendistributions are related by the  equations
\begin{eqnarray} 
\lambda_i&=&f+g\, l_i 
\label{eq:autovaloriKL} \\
\ker \Bigl(\boldsymbol{L}-\lambda_i\boldsymbol{I}\Bigr)^{\rho_i}&=&\ker \Bigl(\boldsymbol{K}-l_i\boldsymbol{I}\Bigr)^{\rho_i} 
 \label{eq:autovetoriKL}
\end{eqnarray}
for $i=1,\ldots,s$ .
\end{remark}

\begin{remark} \label{rem:QBH}
For the large class represented by the quasi-bi-Hamiltonian systems \cite{BCRR,MT,MTPLA,MTRomp} with two degrees of freedom, we can establish a simple connection with the present theory. In fact, due to the results of \cite{TT2016SIGMA}, we can prove that if the Haantjes operator $\boldsymbol{L}$ in eq. \eqref{eq:NK}  is also a Nijenhuis operator with 
\begin{equation} \label{eq:fgQBH}
f= -\frac{1}{2}\,trace(\boldsymbol{K} )\ , \qquad\qquad g=1   \ ,
\end{equation}
 then the Hamiltonian system under scrutiny admits a quasi-bi-Hamiltonian formulation.
Therefore,  the eigenvalues of the Haantjes generator $\boldsymbol{K}$  are themselves characteristic functions of the Haantjes web since, by plugging the solution \eqref{eq:fgQBH} into eq. \eqref{eq:autovaloriKL}, we obtain
\begin{equation} \label{eq:lQBH}
\lambda_1=-l_2 \ ,\qquad\qquad \lambda_2=-l_1\ .
\end{equation}
In other words, the spectrum of the Haantjes generator $\boldsymbol{K}$ coincides with the spectrum of the Nijenhuis generator $\boldsymbol{L}$. Therefore,  the eigenvalues of the Haantjes operator   \eqref{eq:LDH}  are  characteristic functions of the characteristic web associated with $\boldsymbol{K}^{(DH)}$. This explains, in the case of the Drach-Holt system, the occurrence of  the recurrence relations \eqref{eq:autocv} and the involution properties \eqref{eq:l12PB}.
\end{remark}

\subsection{Separation coordinates}
These functions read
\begin{equation} \label{eq:la}
 \lambda_1= -6(p_x+3\sqrt{2k_1}y^{2/3}), \
 \lambda_2= -6(p_x-3\sqrt{2k_1}y^{2/3}) \ .
\end{equation}
They are in involution with respect to the Poisson bracket induced by $\omega$, due to eq. \eqref{eq:l12PB}. In order to get a system of DH coordinates, we apply the procedure of Section \ref{sec:cOmH}. 

\vspace{2mm}

1) We have that $T_{\boldsymbol{x}}^*M=\mathcal{E}_1^\circ(\boldsymbol{x})\oplus \mathcal{E}_2^\circ(\boldsymbol{x})$, where
 \begin{equation}
 \mathcal{E}_1^\circ= \ker \Bigl((\boldsymbol{K}^{(DH)})^{T} -l_1\boldsymbol{I}\Bigr)=\langle\rd \lambda_1,\gamma_1\rangle \ , \quad
 \mathcal{E}_2^\circ= \ker \Bigl((\boldsymbol{K}^{(DH)})^{T} -l_2\boldsymbol{I}\Bigr)=\langle \rd \lambda_2,\gamma_2\rangle
  \end{equation} with 
\begin{eqnarray*}
\gamma_1&=& -6\,   \sqrt {2 k_1} \, y^{1/3} \rd x 
+\frac {p_y\,y^{1/3}+4\,\sqrt {2 k_1}x}{y^{2/3} }\rd y\,
+2\,{\frac {x}{y^{1/3}}}\, \rd p_x 
+3y^{2/3} \rd p_y \ ,\\
\gamma_2&=& 6\,   \sqrt {2 k_1} \, y^{1/3} \rd x 
+\frac {p_y\,y^{1/3}-4\,\sqrt {2 k_1}x}{y^{2/3} }\rd y\,
+2\,{\frac {x}{y^{1/3}}}\, \rd p_x 
+3y^{2/3} \rd p_y \ .
\end{eqnarray*}
2)
We find two exact $1$-forms $\alpha_1=\rd \mu_1=f_1 \rd \lambda_1+g_1\gamma_1$, $\alpha_2=\rd \mu_2=f_2 \rd \lambda_2+g_2\gamma_2$ such that their potential functions $(\mu_1,\mu_2)$ fulfill the canonical  relations
\begin{equation}
\{ \lambda_1, \mu_1 \}=\langle \rd \lambda_1, P_0 \rd \mu_1 \rangle=1 \ ,\qquad \{\lambda_2 ,\mu_2 \}=\langle \rd \lambda_2, P_0 \rd \mu_2 \rangle=1\ .
\end{equation}
3) Such algebraic equations provide us with the solutions 
$$
g_1= -\frac{1}{72 \sqrt{2 k_1} y^{1/3}}\ , \qquad g_2=  \frac{1}{72 \sqrt{2 k_1} y^{1/3}} \ .
$$
4)
Now, requiring that the 1-forms $ \alpha_1$ and $ \alpha_2$ are closed, we find a  family of solutions for $(f_1,f_2)$. The simplest choices are 
 \begin{equation}
f_1=-\frac{ x} {216 \sqrt{2k_1}y^{2/3} }\ , \qquad
f_2=\frac{ x} {216 \sqrt{2k_1}y^{2/3}} \ .
\end{equation}
(5) Thus, we get the following momenta

\begin{equation}\label{eq:mu}
\mu_1=3 p_y y^{1/3}-6\sqrt{2 k_1} x\ , \qquad
\mu_2=3p_y y^{1/3}+6\sqrt{2 k_1} x \ .
\end{equation}
They are, by construction,  canonically conjugated to the coordinates $(\lambda_1,\lambda_2)$.

\subsection{Separation equations of Jacobi--Sklyanin}
The Jacobi-Sklyanin  approach represents a fundamental piece in the theory of separable systems.  Here we will establish a connection between the Haantjes geometry and the Jacobi-Sklyanin separation equations for the case of the Drach-Holt system.

These  equations allow one to construct a solution $(W,E)$ of the Hamilton-Jacobi equation, according to Eq \eqref{eq:W}. 
The set of coordinates \eqref{eq:la}, \eqref{eq:mu}, being DN coordinates, are separation variables for both the Hamiltonian functions $H_1$ and $H_2$. Besides,  they fulfill the  Jacobi-Sklyanin separation equations
$$
\begin{array}{ll}
 b_1 \mu_1^2+ b_2 \mu_1+b_3\lambda_1^3+\lambda_1 H_1+H_2+b_4&= 0 \ ,\\
 b_1 \mu_1^2+ b_2 \mu_1-b_3\lambda_1^3-\lambda_1 H_1-H_2+b_4&= 0  \ ,
\end{array}
$$
where $b_i$, $i=1,\ldots,7$ are the constants given by
$$
b_1=-10368\sqrt{2k_1^3},\quad b_2=-216 \sqrt{2k_1}\,k_2 ,\quad 
  b_3=\frac{1}{216}, \quad b_4=18 \sqrt{2k_1} k_3 \ .
$$
We arrive therefore at the separated solutions of the Hamilton-Jacobi equation
$$
\begin{array}{ll}
&W_1(\lambda_1;  h_1, h_2)= \frac{1}{2b_1}
\int^{\lambda_1} {\Big(-b_2\pm \sqrt{b_2^2-4b_1P_3(\lambda_1)} \bigg)\rd \lambda_1}, \\
&W_2(\lambda_2; h_1, h_2)=\frac{1}{2b_1}
\int^{\lambda_1} {\Big(-b_2\pm \sqrt{b_2^2-4b_1Q_3(\lambda_1)} \bigg)\rd \lambda_1},
\end{array}
$$
where $h_1,h_2$ are the values of $H_1$, $H_2$ on the Lagrangian tori, and
$$
\begin{array}{ll}
 &P_3(\lambda_1;h_1, h_2 ):=  b_3\lambda_1^3+\lambda_1 h_1+h_2+b_4\ ,\\
&Q_3(\lambda_2;h_1, h_2)=-b_3\lambda_1^3-\lambda_1 h_1-h_2+b_4\ .
\end{array}
$$

\section{Multiseparable systems and Haantjes geometry} \label{sec:multiseparable}

A particularly interesting instance of the previous theory is represented by the case of multiseparable systems. They are Hamiltonian systems that can be separated  in more than one coordinate system in their phase space. 
Fundamental physical examples of multiseparable systems are the $n$-dimensional harmonic oscillator and the Kepler system. 
Another important class is represented by the four Smorodinsky-Winternitz systems, which are the only systems in the Euclidean plane admitting orthogonal separation variables. 

Multiseparable systems are superintegrable ones, provided that the sets of separation functions in the Hamilton-Jacobi equation related to different separation coordinates are functionally independent. 
However, to our knowledge, there is no general theoretical result establishing a  relation between the two notions of superintegrability and separability. For instance, for the classical Post-Winternitz (PW) system \cite{PW2011}, which is maximally superintegrable, no separation coordinates are known in phase space. Besides, due to the presence of integrals of motion of degree higher than two as polynomials in the momenta, the PW system does not admit orthogonal separation coordinates in its configuration space. A superintegrable system can also be simply separable, without being a multiseparable one. This is the case for the anisotropic oscillator discussed in Section \ref{sec:anisotropic}. In this article, we shall focus mainly on the case of superintegrable multiseparable systems, which is perhaps the most interesting one from a geometric and physical point of view.

In order to extend the Haantjes geometry to the case of multiseparable systems, an important, preliminary aspect should be pointed out. 

Let us consider an integrable Hamiltonian system with Hamiltonian function $H$; we denote by $\{(J_k,\phi_k)\}$,   $k =1,\ldots, n$, a set of action-angle variables for the system, with associated frequencies $\nu_{k}(\boldsymbol{J}):=\frac{\partial{H}}{\partial J_{k}}$. In \cite{TT2016prepr}, the Liouville-Haantjes theorem was proved. Under the hypothesis of \textit{nondegeneracy} for $H$, that is
\begin{equation}\label{eq:And}
\det\left(\frac{\partial \nu_k}{\partial J_i} \right)=\det\left(\frac{\partial^2 H}{\partial J_i \partial J_k} \right)\neq 0 \ ,
\end{equation}
this theorem states that  there exists a semisimple $\omega \mathscr{H}$ manifold in any tubular neighbourhood of an Arnold  torus. However, superintegrable systems just violate the condition \eqref{eq:And}; therefore, the LH theorem cannot be applied to them. Nevertheless, this does not imply that $\omega \mathscr{H}$ structures cannot exist for superintegrable systems. Indeed, we can construct such structures by means of a different approach, based on a simple consequence of Theorem \ref{th:SoVgLc}.

\begin{corollary} \label{th:multi}
An integrable Hamiltonian system possesses  as many inequivalent separation coordinate systems as the number of the independent semisimple Abelian $\omega\mathscr{H}$ manifolds of class $n$ that it admits.
\end{corollary}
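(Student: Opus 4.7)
The plan is to use Theorem \ref{th:SoVgLc} in both directions, reading the statement as a bijection between equivalence classes of separation coordinate systems and independent semisimple Abelian $\omega\mathscr{H}$ structures of class $n$ on $M$.

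For the direction from structures to coordinates, I would start from an arbitrary semisimple Abelian $\omega\mathscr{H}$ structure of class $n$. The corollary on Darboux--Haantjes coordinates guarantees a local chart $(\boldsymbol{q},\boldsymbol{p})$ that simultaneously puts $\omega$ in Darboux form and diagonalises the entire algebra $\mathscr{H}$ according to \eqref{eq:Hdiag}. The direct part of Theorem \ref{th:SoVgLc} then ensures that these coordinates separate every Hamilton--Jacobi equation associated with a Haantjes chain supported by $\mathscr{H}$, in particular the one built on the Hamiltonian $H$ of the integrable system. Two distinct $\omega\mathscr{H}$ structures cannot yield the same DH chart, since the chart uniquely determines the simultaneous diagonal form of all generators up to their eigenvalues, and hence determines the algebra as a module.

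For the reverse direction, I would take an arbitrary separation coordinate system $(\boldsymbol{q},\boldsymbol{p})$ in which $n$ functionally independent integrals $\{H_1,\ldots,H_n\}$ are separable. The converse part of Theorem \ref{th:SoVgLc} produces, via the explicit formula \eqref{eq:LSoV}, the $n$ diagonal, mutually commuting operators $\boldsymbol{K}_\alpha$, which are automatically semisimple, compatible with $\omega$, and of vanishing Haantjes torsion. They generate a semisimple Abelian $\omega\mathscr{H}$ structure of class $n$; moreover the starting $(\boldsymbol{q},\boldsymbol{p})$ are themselves its DH coordinates. Two inequivalent separation systems give rise to two distinct sets of ratios $\tfrac{\partial H_\alpha/\partial p_i}{\partial H/\partial p_i}$, hence to two distinct algebras generated by \eqref{eq:LSoV}.

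The main obstacle I expect is not computational but definitional: making precise the notions of \emph{independent} $\omega\mathscr{H}$ structures and of \emph{inequivalent} separation coordinate systems so that the two constructions above become mutual inverses. Concretely, I would declare two DH charts equivalent when they diagonalise the same Haantjes algebra, i.e.\ when they differ by a reordering and by the one-variable rescalings allowed by the diagonal ansatz \eqref{eq:Hdiag}, and two $\omega\mathscr{H}$ structures independent when the $C^\infty(M)$-modules they span are distinct. With this bookkeeping in place, assembling the two directions yields the announced equality of cardinalities.
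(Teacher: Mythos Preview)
Your approach is correct and rests on the same tool as the paper---namely Theorem~\ref{th:SoVgLc}---but you do considerably more than the paper does. The paper's proof is a single sentence: it only invokes the direct part of Theorem~\ref{th:SoVgLc} to say that each semisimple Abelian $\omega\mathscr{H}$ structure of class $n$ produces a DH chart which separates the Hamilton--Jacobi equation. It does not address the converse direction, nor does it discuss injectivity of the map from structures to coordinate systems, nor does it make the notions of ``independent'' and ``inequivalent'' precise.

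Your proof, by contrast, treats the corollary as a genuine bijection and supplies both directions via the two halves of Theorem~\ref{th:SoVgLc}, together with an explicit bookkeeping convention for the equivalence relations on each side. This is a more honest reading of the statement ``as many \ldots\ as'', and your third paragraph correctly identifies the soft spot: without pinning down what ``independent'' and ``inequivalent'' mean, the equality of cardinalities is not well posed. The paper leaves this implicit. What you gain is a self-contained argument; what the paper gains is brevity, at the cost of leaving the converse and the injectivity to the reader.
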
 
\begin{proof}
It suffices to observe that, according to Theorem \ref{th:SoVgLc}, for each semisimple $\omega \mathscr{H}$ structure of class $n$ admitted by the Hamiltonian system there exists a set of Darboux-Haantjes coordinates, which play the role of separating coordinates  for the corresponding Hamilton-Jacobi equation.  
\end{proof}
The previous result represents, jointly with Theorem \ref{th:SoVgLc}, the main theoretical contribution of this work.

\vspace{3mm}

In the following sections, we shall exhibit explicitly the Haantjes structures associated with celebrated examples of both maximally superintegrable and multiseparable Hamiltonian systems.

\section{Construction of Haantjes operators in $T^*Q$: A novel geometric lift} \label{sec:lift}
The problem of constructing Haantjes operators on an $n$-dimensional manifold $M$ is, in general, a hard one, as it entails solving a system of $(n^2(n-1)/2) $ nonlinear PDE of first order in the $n^2$ unknown components of the operators we wish to determine. 

However, when $M \equiv T^{*}Q$, being $Q$ the configuration space of a given mechanical system, one can plan to simplify the problem of constructing a $(T^{*}Q, \omega, \mathscr{H})$ manifold by means of a geometric procedure which allows us to lift a Haantjes operator $\bs{A}: TQ\rightarrow TQ$ to a suitable Haantjes operator $\hat{\bs{A}}:T(T^{*}Q)\to T(T^{*}Q)$. The approach envisaged, in order to be effective, should preserve the vanishing condition of Haantjes tensors. The procedure we propose to this aim is inspired by the one introduced in \cite{IMM} for the construction of a suitable Nijenhuis operator for the Benenti systems; later it has been successfully formulated and applied in \cite{FP} in the context of the $\omega N$ geometry. In those works, from a technical point of view the \emph{complete lift}  from a manifold  to its cotangent bundle, introduced by Yano in \cite{YI1973}, has been adopted. Although the Yano lift preserves the vanishing of  Nijenhuis tensors, unfortunately this property does not hold true for the Haantjes case: the Yano lift of a Haantjes operator need not be another Haantjes operator. 

In order to overcome this drawback, we propose a novel geometric lifting procedure, which generalizes Yano's one.   To this aim, let us consider an operator $\bs{A}: T Q\rightarrow TQ$  and the canonical projection map $\pi: T^*Q \rightarrow Q$. Let us denote by $\bs{\hat{A}}$ a lift of $\bs{A}$ to the cotangent bundle $T^*Q$ which is required to be \emph{projectable} onto $\bs{A}$, that is to say,
$\bs{A}\, \pi_*\, =\pi_*\, \bs{\hat{A}}\, $. Such a condition is fulfilled if and only if $\bs{\hat{A}}$ takes the following block-matrix form in a Darboux chart $(\bold{q},\bold{p})$:
\begin{equation}
\bs{\hat{A}}=\left[\begin{array}{c|c}\bs{A}(\bs{q}) & \bs{0} \\
\hline \bs{C}(\bs{q},\bs{p}) & \bs{D}(\bs{q},\bs{p})
\end{array}\right]\ ,
\end{equation}
where $\bs{C}$ and $\bs{D}$ are $(n\times n)$  matrices depending possibly on all coordinates. In addition, we must impose the compatibility condition with the symplectic form \eqref{eq:gen}, which reduces the form of $\bs{\hat{A}}$ to
\begin{equation} \label{eq:Alift}
\bs{\hat{A}}=\left[\begin{array}{c|c}\bs{A}(\bs{\bs{q}}) & \bs{0} \\
\hline \bs{C(\bs{q},\bs{p})} & \bs{A}^T(\bs{q})
\end{array}\right] \qquad \bs{C}+\bs{C}^T=\boldsymbol{0} \ .
\end{equation}
Assuming that $\bs{A}$ is a Haantjes operator, we wish to determine the form of the matrix $\bs{C}$ in such a way that 
$\bs{\hat{A}} $ is also a Haantjes operator. In the subsequent discussion, we shall prove our result  for configuration spaces of dimension $n=2$ only.
\par
 As clarified in Example 1, when dim $Q=2$, any operator $\bs{A}(\bs{q})$ is a Haantjes operator. Then, its lifted operator takes the form
\begin{equation}
\bs{\hat{A}}=
\left[\begin{array}{cc|cc}
a & b  & 0 & 0 \\
c & d & 0 & 0 \\ \hline 
0 & r & a & c \\ 
-r & 0 & b & d
\end{array}
\right] \ ,
\end{equation}
where all the entries are functions of $(q^{1},q^{2})$ only, except for the smooth function $r=r(q^1,q^2,p_1,p_2)$.  Let us assume that the eigenvalues of $\bs{A}(\bs{q})$ are pointwise distinct; by requiring that $\bs{\hat{A}}$ is a Haantjes operator we find that the most general  solution for $r$  is an affine function of $(p_1,p_2)$. Precisely, $$r=f(q^1,q^2)p_1+g(q^1,q^2)p_2+h(q^1,q^2),$$ where
\begin{eqnarray}
f & = &\frac{\partial a}{\partial q^2}-\frac{\partial b}{\partial q^{1}}-\frac{a-d}{\Delta}\tau^1_{12}-\frac{2b}{\Delta}   \tau^2_{12} \ ,  \\
g & = &\frac{\partial c}{\partial q^2}-\frac{\partial d}{\partial q^1}+ \frac{a-d}{\Delta}\tau^2_{12}  -\frac{2c}{\Delta}   \tau^1_{12} \ ,
\end{eqnarray}
$\Delta:=(a-d)^2+4bc$ is the discriminant of the minimal polynomial of $\bs{A(q)}$, $\tau^1_{12}$ and $ \tau^2_{12}$ are the two independent components of the Nijenhuis torsion of $\bs{A}(\bs{q})$ and $h(q^1,q^2)$ is an arbitrary smooth function. In particular, when $\bs{A}(\bs{q})$ is a Nijenhuis operator and $h(q^1,q^2)=0$, the operator $\bs{\hat{A}} $ coincides with the Yano complete lift of $\bs{A}(\bs{q})$, therefore it is still a Nijenhuis operator. 


The lifting procedure presented here can be extended to the $n$-dimensional case, $n\geq 3$. 
 The details concerning this extension will be discussed elsewhere.

\section{Haantjes structures for multiseparable systems in $E_2$} \label{sec:construction}

The Smorodinsky-Winternitz (SW) systems are a family of superintegrable systems defined in the Euclidean plane $E_2$, which were introduced
first as quantum-mechanical systems in \cite{FMSUW,MSVW, Winternitz} and later studied from a group theoretical point of view in \cite{STW,TTW}. They are all multiseparable in $E_2$ and admit three independent integrals of motion, expressed in terms of second-degree polynomials in the momenta. Also, they are separable in at least two different orthogonal coordinate systems in their configuration space. 

In \cite{TT2012}, the SW systems were analyzed in the context of Nijenhuis geometry. Precisely, it was shown that an $\omega N$ structure can be associated with each of them; this can be achieved by renouncing the standard notion of Lenard chain, and using a generalized version of it. In this section, we will show that a natural and more general framework for studying the geometry of SW systems is offered by the Haantjes geometry. As we will show, one can introduce Haantjes chains and construct two different $\omega \mathscr{H}$ structures for each of the SW systems. In turn,  according to Theorem \ref{th:SoVgLc} and Corollary \ref{th:multi}, these structures guarantee the existence of separating variables for the SW systems.

\subsection{The Smorodinsky-Winternitz system SWI}
The Hamiltonian function is
\begin{equation}
H = H_{1} = \dfrac{1}{2} (p_{x}^{2} + p_{y}^{2}) + \dfrac{1}{2} a (x^{2} + y^{2}) + \dfrac{c_{1}}{x^{2}} + \dfrac{c_{2}}{y^{2}}.
\end{equation}
This system is separable in \textit{cartesian}, \textit{polar} and \textit{elliptic} coordinates, and admits the integrals
\begin{align}
& H_{2} = \dfrac{p_{y}^{2}}{2} + \dfrac{a}{2} y^{2} + \dfrac{c_{2}}{y^{2}}, \\
&H_{3} =  2 \left( c^{2} p_{x}^{2} + (x p_{y} - y p_{x})^{2} + a c^{2} x^{2} + 2 c_{1} \frac{y^{2} + c^{2}}{x^{2}} + 2 c_{2} \left( \frac{x}{y} \right)^{2} \right) \ ,
\end{align}
with $a$, $c$, $c_1$ and $c_2 \in \mathbb{R}$.
A first Haantjes structure is $(\omega, \bs{I}, \bs{K}_{2}^{(SWI)})$, where $\bs{I}$ is the identity operator and 
\begin{equation}
\mathbf{K}_{2}^{(SWI)} = diag(0,1,0,1) \ .
\end{equation}
The algebra $\langle \bs{I}, \bs{K}_{2}^{(SWI)} \rangle$ admits the Cartesian coordinates as DH coordinates: therefore they are obviously separation coordinates for the SWI system.

\vs

We can also obtain a second Haantjes structure $(\omega, \bs{I}, \bs{K}_{3}^{(SWI)})$ with
\begin{equation}
\bs{K}_{3}^{(SWI)} = 4
\left[\begin{array}{cc|cc}
y^{2} + c^{2} & - x y & 0 & 0 \\
- x y & x^{2} & 0 & 0 \\ \hline
0 & - (x p_{y} - y p_{x}) & y^{2} + c^{2} & - x y \\
x p_{y} - y p_{x} & 0 & - x y & x^{2} \\
\end{array}\right].
\end{equation}
The algebra $ \langle\bs{I}, \bs{K}_{3}^{(SWI)} \rangle$ diagonalizes in elliptic coordinates (separation coordinates). If $c=0$, this algebra diagonalizes in polar coordinates. 

\subsection{The Smorodinsky-Winternitz system SWII}
The Hamiltonian function reads:

\begin{equation}
H = H_{1} = \dfrac{1}{2} (p_{x}^{2} + p_{y}^{2}) + a (4 x^{2} + y^{2}) + c_{1} x + \dfrac{c_{2}}{y^{2}} \ ,
\end{equation}
with $a$, $c_1$, $c_2\in \mathbb{R}$. 
The associated integrals are
\begin{align}
& H_{2} = \dfrac{p_{y}^{2}}{2} + a y^{2} + \dfrac{c_{2}}{y^{2}}, \\
& H_{3} = p_{y} (y p_{x} - x p_{y}) + 2 a x y^{2} + \dfrac{c_{1}}{2} y^{2} - 2 c_{2} \dfrac{x}{y^{2}}\ .
\end{align}

We obtain the manifolds $(\omega, \bs{I}, \bs{K}_{2}^{(SWII)})$ and $(\omega, \bs{I}, \bs{K}_{3}^{(SWII)})$, where
\begin{equation}
\bs{K}_{2}^{(SWII)} = diag(0,1,0,1),
\end{equation}
and
\begin{equation}
\bs{K}_{3}^{(SWII)} =
\left[\begin{array}{cc|cc}
0 & y & 0 & 0 \\
y & - 2 x & 0 & 0 \\ \hline
0 & p_{y} & 0 & y \\
- p_{y} & 0 & y & - 2 x \\
\end{array}\right] \ .
\end{equation}
The first structure admits the Cartesian coordinates as DH coordinates,  whereas the second one diagonalizes in parabolic coordinates.

\subsection{The Smorodinsky-Winternitz system SWIII}

The system is defined in polar coordinates $(r, \theta, p_{r}, p_{\theta})$  by the Hamiltonian
\begin{equation}
H = H_{1} = \dfrac{1}{2} \left(p_{r}^{2} + \dfrac{p_{\theta}^{2}}{r^{2}}\right) + \dfrac{\alpha}{r} + \dfrac{1}{r^{2}} \dfrac{\beta + \gamma \cos \theta}{\sin^{2} \theta},
\end{equation}
where $\alpha$, $\beta$, $\gamma\in \mathbb{R}$. The integrals of motion are
\begin{align}
& H_{2} = \dfrac{p_{\theta}^{2}}{2} + \dfrac{\beta + \gamma \cos \theta}{\sin^{2} \theta} \ , \\
& H_{3} = - p_{\theta} \left( \dfrac{p_{\theta} \cos \theta}{r} + p_{r} \sin \theta \right) - \alpha \cos \theta - \dfrac{\gamma + 2 \beta \cos \theta + \gamma \cos^{2} \theta}{r \sin^{2} \theta} \ .
\end{align}

We can construct the Haantjes manifolds $(\omega, \bs{I}, \bs{K}_{2}^{(SWIII)})$ and $(\omega, \bs{I}, \bs{K}_{3}^{(SWIII)})$, with
\begin{equation}
\bs{K}_{2}^{(SWIII)} = diag(0,r^{2},0,r^{2}),
\end{equation}
\begin{equation}
\bs{K}_{3}^{(SWIII)} = -
\left[\begin{array}{cc|cc}
0 & r^{2} \sin \theta & 0 & 0 \\
\sin \theta & 2 r \cos \theta & 0 & 0 \\ \hline
0 & p_{\theta} \cos \theta & 0 & \sin \theta \\
- p_{\theta} \cos \theta & 0 & r^{2} \sin \theta & 2 r \cos \theta \\
\end{array}\right] \ .
\end{equation}
The algebra $ \langle\bs{I}, \bs{K}_{2}^{(SWIII)} \rangle$ ensures the separability of the system in polar coordinates, which are indeed DH coordinates. Notice that if we re-write the operator $\bs{K}_{2}^{(SWIII)}$  in Cartesian coordinates, it coincides (up to an irrelevant multiplicative factor) with the operator $\bs{K}_{3}^{(SWI)}$, for $c=0$. Besides, the algebra $ \langle\bs{I}, \bs{K}_{3}^{(SWIII)} \rangle$ diagonalizes in parabolic coordinates. If we write the expression of $\bs{K}_{3}^{(SWIII)}$ in cartesian coordinates, it converts into the form of the operator $\bs{K}_{3}^{(SWII)}$.

\subsection{The Smorodinsky-Winternitz system SWIV}

The Hamiltonian function of this system is
\begin{equation}
H = H_{1} = \dfrac{1}{2} \dfrac{p_{\xi}^{2} + p_{\eta}^{2}}{\xi^{2} + \eta^{2}} + \dfrac{2 \alpha + \beta \xi + \gamma \eta}{\xi^{2} + \eta^{2}},
\end{equation}
where we have used the parabolic coordinates
\begin{equation}
x = \dfrac{1}{2} (\xi^{2} - \eta^{2}), \hspace{0.5 cm} y = \xi \eta.
\end{equation}

The corresponding integrals read
\begin{align}
& H_{2} = \dfrac{\gamma \xi^{3} + \xi^{2} (p_{\xi} p_{\eta} - \beta \eta) - \xi \eta (p_{\xi}^{2} + p_{\eta}^{2} + 4 \alpha + \gamma \eta) + \eta^{2} (p_{\xi} p_{\eta} + \beta \eta)}{\xi^{2} + \eta^{2}}, \\
& H_{3} = \dfrac{\xi^{2} \left( p_{\eta}^{2} + 2 ( \alpha + \gamma \eta ) - 2 \beta \xi \eta^{2} - \eta^{2} (p_{\xi}^{2} + 2 \alpha) \right)}{\xi^{2} + \eta^{2}}.
\end{align}

In the coordinates $(\xi, \eta, p_{\xi}, p_{\eta})$ we get the Haantjes structures $(\omega, \bs{I}, \bs{K}_{2}^{(SWIV)})$ and $(\omega, \bs{I}, \bs{K}_{3}^{(SWIV)})$, with 
\begin{equation}
\bs{K}_{2}^{(SWIV)} =
\left[\begin{array}{cc|cc}
- 2 \xi \eta & \xi^{2} + \eta^{2} & 0 & 0 \\
\xi^{2} + \eta^{2} & - 2 \xi \eta & 0 & 0 \\ \hline
0 & 0 & - 2 \xi \eta & \xi^{2} + \eta^{2} \\
0 & 0 & \xi^{2} + \eta^{2} & - 2 \xi \eta \\
\end{array}\right],
\end{equation}
\begin{equation}
\bs{K}_{3}^{(SWIV)} = diag(-2 \eta^{2},2 \xi^{2},-2 \eta^{2},2 \xi^{2})\ .
\end{equation}
These two structures are related with two different parabolic  coordinate systems, with distinct axes, which are separating coordinates for the systems SWIV. Notice that the operator $\bs{K}_{3}^{(SWIV)}$ in Cartesian coordinates converts into the operator $-2\bs{K}_{3}^{(SWII)}$.

\begin{remark}
The form of the Haantjes operators presented above can be geometrically interpreted, in all cases, as the application of the generalized lifting procedure described in Section \ref{sec:lift} to a suitable Haantjes operator on the configuration space $E_2$, related to a specific coordinate system in which they diagonalize. Its lifted version to the $4$-dimensional phase space is again a Haantjes operator, at most linearly depending on the momenta. 
\end{remark}

\section{Anisotropic oscillator with Rosochatius terms} \label{sec:anisotropic}
\par

We shall determine now the $\omega\mathscr{H}$ structures of an important physical model: the two-dimensional anisotropic oscillator with Rosochatius terms \cite{Rosochatius}. This system in the general case with $n$ degrees of freedom has been  studied in \cite{Wojc1985,RTW1,RTW2}, where its maximal superintegrability was established. In particular, in \cite{RTW1} the higher-order (missing)  integral was determined by means of a geometric approach based on the Marsden-Weinstein reduction procedure. The anisotropic oscillator on curved spaces has been introduced and studied in \cite{BHKN}.

\vs

The Hamiltonian function of the system reads
\begin{equation}\label{eq:HanoscRos}
H_{AO}= \frac{1}{2} \bigg(p_x^2+p_y^2 +\nu^2(n_1 x^2+ n_2 y^2)+\frac{c_1}{x^2}+\frac{c_2}{y^2}\bigg) \ ,
\end{equation}
where $n_1$, $n_2\in \mathbb{N}\backslash\{0\}$, $c_1$, $c_2$, $\nu\in \mathbb{R}$. The integrals of motion corresponding to the one-dimensional energies are
\begin{eqnarray}
E_1&=& \frac{1}{2} \bigg(p_x^2 +\nu^2 n_1 x^2+\frac{c_1}{x^2}\bigg) \ , \\
E_2&=&\frac{1}{2} \bigg(p_y^2 +\nu^2  n_2 y^2+\frac{c_2}{y^2}\bigg) \ .
\end{eqnarray}
The system \eqref{eq:HanoscRos} is separable in Cartesian coordinates, and admits a Haantjes algebra $\mathscr{H}_1=\langle \boldsymbol{I}, \boldsymbol{K}_{1}^{(AO)}\rangle$, with
\begin{equation}
\bs{K}_{1}^{(AO)} =diag(1,0,1,0).
\end{equation}

However, the system admits a further integral, not related with separating coordinates. Let us introduce the complex quantities
$$\Delta_j=p_j^2+\frac{c_j}{x_j^2}-\nu^2 n_j^2 x_j^2-2i\nu n_j p_j x_j, \qquad \Psi=\Delta_1^{n_2} \bar{\Delta}_2^{n_1} .$$ 
The additional, real integral is given by 
\beq \label{eq:ReQ}
\Re(\Psi)= \Re (\Delta_1^{n_2} \bar{\Delta}_2^{n_1}),
\eeq
where $\Re(z)$ denotes the real part of $z\in \mathbb{C}$.

\vs

We can obtain a second Haantjes algebra in the following way. First, we shall consider the family of operators of the form
\begin{equation}\label{eq:LPW}
\bs{K} =
\left[\begin{array}{cc|cc}
m_{d} & 0 & 0 & 0 \\
m_{21}  & m_{d} & 0 & 0 \\ \hline
0 & m_{32}  & m_{d} & m_{21} \\
-m_{32}  & 0 & 0 & m_{d}  \\
\end{array}\right],
\end{equation}
which are Haantjes operators for any choice of the three arbitrary functions $m_{d}$, $m_{21}$, $m_{32}$ of $(x,y,p_x,p_y)$. This family generalizes (up to a transposition of coordinates with momenta)  the form  of two Haantjes generators admitted by the Post-Winternitz system \cite{TT2019prepr}
\begin{equation}
H^{(PW)} = \dfrac{1}{2} (p_{x}^{2} + p_{y}^{2})  + a \dfrac{x}{y^{2/3}}, \qquad a\in \mathbb{R}.
\end{equation}
Notice that the operators of the form \eqref{eq:LPW} generically are non-semisimple ones.  We shall show that the class \eqref{eq:LPW}  also contains the explicit form of the specific Haantjes operator related to the integral of motion \eqref{eq:ReQ}.

Now, let us impose the chain equation $(\bs{K}_{2}^{(AO)})^{T}~ dH_{AO}=d \Re(\Psi)$, where $\bs{K}_{2}^{(AO)}$ has the form \eqref{eq:LPW}. We find the unique solution 
\begin{eqnarray*}
\big(\bs{K}_{2}^{(AO)})^{T}~dH &=&
(\nu^2(m_{d}n_1^2x+m_{21}n_2^2y)-m_{32}p_y-(\frac{m_{d}c_1}{x^3}+\frac{m_{21}c_2}{y^3})\big)dx \\
& + & (\nu^2m_{d}n_2^2y+m_{32}p_x-\frac{m_{d}c_2}{y^3})dy +
(m_{d}p_x)dp_x \\  &+& (m_{21}p_x+m_{d}p_y) dp_y  \ ,
\end{eqnarray*}
where
\begin{eqnarray}
m_{d}&=&\frac{1}{p_x}\frac{\partial \Re{(\Psi)}}{\partial p_x} \ , \quad p_x \neq 0 \ ; \\
m_{21}&=&\frac{1}{p_x}\bigg(-\frac{p_y}{p_x} \frac{\partial \Re{(\Psi)}}{\partial p_x} +\frac{\partial \Re{(\Psi)}}{\partial p_y} \bigg) \ ; \\
m_{32}&=&\frac{1}{p_x}\bigg(-\nu^2\big(n_2^2 y+\frac{c_2}{y^3}\big)  \frac{1}{p_x}\frac{\partial \Re{(\Psi)}}{\partial p_x}+  
\frac{\partial \Re{(\Psi)}}{\partial y}  
  \bigg)  \ . 
\end{eqnarray}

The explicit form of the coefficients $m_{d}$, $m_{21}$, $m_{32}$ is reported in Appendix \ref{AppA}. Thus, we have proved that the system \eqref{eq:HanoscRos} admits a second, non-semisimple $\omega \mathscr{H}$ structure, with $\mathscr{H}_2= \langle\boldsymbol{I}, \boldsymbol{K}_{2}^{(AO)}\rangle$. It is an open problem to ascertain whether the system \eqref{eq:HanoscRos} admits other separating coordinates in phase space, apart from the Cartesian ones.

\section{$\omega\mathscr{H}$ structures for multiseparable systems in $E_3$} \label{sec:maximal}
We shall study in detail three relevant examples of multiseparable systems in the Euclidean space $E_3$, in the context of Haantjes geometry. One of them is maximally superintegrable, the other two are minimally superintegrable ones \cite{Evans1990}. For a different treatment, in the framework of the Killing-St\"ackel theory, see \cite{Ben93}, \cite{Chanu2001}.
In the following analysis, $\mathcal{L}_x$, $\mathcal{L}_y$, $\mathcal{L}_z$ denote the components of the angular momentum in the Cartesian frame.

\subsection{The Kepler system with a Rosochatius-type term}

We shall consider the Hamiltonian function
\begin{equation} \label{eq:KR}
H = H_{1} = \dfrac{1}{2} (p_{x}^{2} + p_{y}^{2} + p_{z}^{2}) - \dfrac{k}{\sqrt{x^{2} + y^{2} + z^{2}}} + \dfrac{k_{1}}{x^{2}} + \dfrac{k_{2}}{y^{2}} \ ,
\end{equation}
whose integrals of motion are
\begin{align}
& H_{2} = \dfrac{1}{2} |\mathbf{\mathcal{L}}|^{2} + (x^{2} + y^{2} + z^{2}) \left( \dfrac{k_{1}}{x^{2}} + \dfrac{k_{2}}{y^{2}} \right), \\
& H_{3} = \dfrac{1}{2} \mathcal{L}_{z}^{2} + (x^{2} + y^{2}) \left( \dfrac{k_{1}}{x^{2}} + \dfrac{k_{2}}{y^{2}} \right), \\
& H_{4} = \dfrac{1}{2} \mathcal{L}_{y}^{2} + \dfrac{k_{1} z^{2}}{x^{2}}, \\
& H_{5} = \mathcal{L}_{x} p_{y} - p_{x} \mathcal{L}_{y} - 2 z \left( - \dfrac{k}{2 \sqrt{x^{2} + y^{2} + z^{2}}} + \dfrac{k_{1}}{x^{2}} + \dfrac{k_{2}}{y^{2}} \right).
\end{align}
These integrals form three families of functions in involution: $\{H_1, H_2,H_3\}$, $\{H_1, H_2,H_4\}$, $\{H_1, H_3,H_5 \}$.
The equations of the Haantjes chains associated are:
 $\bs{K}_{i}^{T} dH = dH_{i}$ $(i=1,\ldots,5)$, with the Haantjes operators

\begin{equation} \label{eq:K2}
\bs{K}_{2} =
\left[\begin{array}{ccc|ccc}
y^{2} + z^{2} & - x y & - x z & 0 & 0 & 0 \\
- x y & x^{2} + z^{2} & - y z & 0 & 0 & 0 \\
- x z & - y z & x^{2} + y^{2} & 0 & 0 & 0 \\ \hline
0 & - (x p_{y} - y p_{x}) & z p_{x} - x p_{z} & y^{2} + z^{2} & - x y & - x z \\
x p_{y} - y p_{x} & 0 & - (y p_{z} - z p_{y}) & - x y & x^{2} + z^{2} & - y z \\
- (z p_{x} - x p_{z}) & y p_{z} - z p_{y} & 0 & - x z & - y z & x^{2} + y^{2} \\
\end{array}\right] \ ,
\end{equation}

\begin{equation} \label{eq:K3}
\bs{K}_{3} =
\left[\begin{array}{ccc|ccc}
y^{2} & - x y & 0 & 0 & 0 & 0 \\
- x y & x^{2} & 0 & 0 & 0 & 0 \\
0 & 0 & 0 & 0 & 0 & 0 \\ \hline
0 & - (x p_{y} - y p_{x}) & 0 & y^{2} & - x y & 0 \\
x p_{y} - y p_{x} & 0 & 0 & - x y & x^{2} & 0 \\
0 & 0 & 0 & 0 & 0 & 0 \\
\end{array}\right] \ ,
\end{equation}

\begin{equation}
\bs{K}_{4} =
\left[\begin{array}{ccc|ccc}
z^{2} & 0 & - x z & 0 & 0 & 0 \\
0 & 0 & 0 & 0 & 0 & 0 \\
- x z & 0 & x^{2} & 0 & 0 & 0 \\ \hline
0 & 0 & z p_{x} - x p_{z} & z^{2} & 0 & - x z \\
0 & 0 & 0 & 0 & 0 & 0 \\
- (z p_{x} - x p_{z} ) & 0 & 0 & - x z & 0 & x^{2} \\
\end{array}\right] 
\end{equation}
\noi and
\begin{equation} \label{eq:K7}
\bs{K}_{5} =
\left[\begin{array}{ccc|ccc}
- 2 z & 0 & x & 0 & 0 & 0 \\
0 & - 2 z & y & 0 & 0 & 0 \\
x & y & 0 & 0 & 0 & 0 \\ \hline
0 & 0 & - p_{x} & - 2 z & 0 & x \\
0 & 0 & - p_{y} & 0 & - 2 z & y \\
p_{x} & p_{y} & 0 & x & y & 0 \\
\end{array}\right] \ .
\end{equation}

We shall also take into account the integral 
\[
H_6:= \dfrac{1}{2} \mathcal{L}_{x}^{2} + \dfrac{k_{2} z^{2}}{y^{2}} 
\ , 
\]
which is not functionally independent with respect to the other ones; however, it will play a relevant role in the construction of separating coordinates, as we shall see below.
Imposing the chain equation  $\bs{K}_{6}^{T} dH = dH_{6}$, we get the further Haantjes operator
\begin{equation} \label{eq:K6}
\bs{K}_{6} =
\left[\begin{array}{ccc|ccc}
0 & 0 & 0 & 0 & 0 & 0 \\
0 & z^2 & -yz & 0 & 0 & 0 \\
0 & -yz & y^{2} & 0 & 0 & 0 \\ \hline
0 & 0 & 0  & 0 & 0 & 0 \\
0 & 0 & z p_{y}- y p_{z} & 0 & z^2 & -yz \\
0 & y p_{z}-z p_{y} & 0 & 0 & -yz & y^{2} \\
\end{array}\right] \ .
\end{equation}
Let us discuss now the separating coordinates admitted by the system  \eqref{eq:KR}. First, we observe that there exist three Abelian semisimple Haantjes algebras related with the operators \eqref{eq:K2}-\eqref{eq:K7}: $\mathscr{H}_1= \langle\bs{I}, \bs{K}_{2}, \bs{K}_{3}\rangle$,  $\mathscr{H}_2=\langle\bs{I},\bs{K}_{2},\bs{K}_{4}\rangle$ and $\mathscr{H}_{3}:= \langle\bs{I}, \bs{K}_2, \bs{K}_6\rangle$.
The algebra $\mathscr{H}_1= \langle\bs{I}, \bs{K}_{2}, \bs{K}_{3}\rangle$, as was discussed in Ref. \cite{TT2021}, diagonalizes in the \ti{spherical polar} coordinates with the $z$-axis as the polar axis; therefore, according to the previous discussion, they are DH coordinates and separating coordinates for the Kepler system \eqref{eq:KR}. The algebra $\mathscr{H}_2$ diagonalizes in spherical polar   coordinates with the $y$-axis as the polar axis; besides, the algebra $\mathscr{H}_3$ ensures the separability of the system in spherical polar coordinates with the $x$-axis as the polar axis.

\vs

We also have the algebra $\mathscr{H}_4:= \langle\bs{I}, \bs{K}_{3}, \bs{K}_{5}\rangle$, which diagonalizes in the  \ti{rotational parabolic} coordinates.

\vs

Finally, the system \eqref{eq:KR} separates in \textit{spherical conical} coordinates. Precisely, the algebra $\mathscr{H}_5:=\langle\bs{I}, \bs{K}_2, \bs{K}_7:= a \bs{K}_6+ b \bs{K}_4 + c \bs{K}_3 \rangle$ diagonalizes in the spherical conical coordinates $u_1\in \mathbb{R}$, $u_2\in \mathbb{R}$, $u_3>0$, where $a\leq u_1 \leq b \leq u_2 \leq c$ and $a,b,c\in \mathbb{R}$.

\subsubsection{Generators of the cyclic Haantjes algebras.} In order to find a generator of a cyclic semisimple Abelian Haantjes algebra of rank $n$, it is sufficient to choose inside the algebra a Haantjes operator whose minimal polynomial is of degree $n$, as explained in Section \ref{sec:cOmH}.
As a consequence of this observation, we obtain that a simple (not unique) choice for the generator for the algebra $\mathscr{H}_1=\langle\bs{I}, \bs{K}_{2}, \bs{K}_{3}\rangle$ reads
\beq \label{eq:CGI}
\bs{L}_{1}= \bs{K}_{2} + \bs{K}_{3} \ .
\eeq
We have
\beq \label{eq:K4C}
\bs{K}_{2}= \al_1^{(1)} \bs{L}_{1}+ \al_2^{(1)} \bs{L}_{1}^{2} \ , \qquad 
\al_1^{(1)}= \f{3 x^2+3 y^2+2 z^2}{2x^2+2y^2+z^2} \ , \qquad \al_2^{(1)}= - \f{1}{2x^2+2y^2+z^2} \ ,
\eeq
and
\beq \label{eq:K5C}
\bs{K}_{3} = \be_1^{(1)} \bs{L}_{1}+ \be_2^{(1)} \bs{L}_{1}^{2} \ , \qquad 
\be_1^{(1)}= - \f{x^2+y^2+z^2}{2x^2+2y^2+z^2}, \qquad \be_2^{(1)}=  \f{1}{2x^2+2y^2+z^2} \ .
\eeq

\vs

A generator of the algebra $\mathscr{H}_2 = \langle\bs{I}, \bs{K}_{2}, \bs{K}_{4}\rangle$ is given by
\begin{equation}
\bs{L}_2= \bs{K}_{2} + \bs{K}_{4} \ .
\end{equation}
We have
\begin{equation}
\bs{K}_{2}= \alpha_1^{(2)} \bs{L}_2+ \alpha_2^{(2)} \bs{L}_{2}^{2} \ ,
\qquad 
\alpha_1^{(2)}= \frac{3 x^2+2 y^2+3 z^2}{2 x^2+y^2+2 z^2} \ , \qquad \alpha_2^{(2)}= -\frac{1}{2 x^2+y^2+2 z^2} \ ,
\end{equation}
and
\begin{equation}
\bs{K}_{4} = \beta_1^{(2)} \bs{L}_2+ \beta_2^{(2)} \bs{L}_{2}^{2} \ , \qquad
\beta_1^{(2)}= -\frac{x^2+y^2+z^2}{2 x^2+y^2+2 z^2} \ , \qquad \beta_2^{(2)}=  \frac{1}{2 x^2+y^2+2 z^2} \ .
\end{equation}
A generator of the algebra $\mathscr{H}_3=\langle\bs{I}, \bs{K}_2, \bs{K}_6 \rangle$ is $\bs{L}_3= \bs{K}_2 + \bs{K}_6$, with
\begin{equation}
\bs{K}_{2}= \alpha_1^{(3)} \bs{L}_3+ \alpha_2^{(3)} \bs{L}_{3}^{2} \ ,
\qquad 
\alpha_1^{(3)}= \frac{2 x^2+3 y^2+3 z^2}{ x^2+2y^2+2 z^2} \ , \qquad \alpha_2^{(3)}= -\frac{1}{ x^2+2y^2+2 z^2} \ ,
\end{equation}
and
\begin{equation}
\bs{K}_{6} = \beta_1^{(3)} \bs{L}_3+ \beta_2^{(3)} \bs{L}_{3}^{2} \ , \qquad
\beta_1^{(3)}= -\frac{x^2+y^2+z^2}{x^2+ 2y^2+2 z^2} \ , \qquad \beta_2^{(3)}=  \frac{1}{ x^2+2y^2+2 z^2} \ .
\end{equation}

A generator of the algebra $\mathscr{H}_4 = \langle\bs{I}, \bs{K}_{3}, \bs{K}_{5}\rangle$ is just  $\bs{L}_4= \bs{K}_{5}$, since the minimal polynomial of $\bs{K}_5$ is of degree $3$. We have

\[
\bs{K}_3= \al_0^{(4)} \bs{I} + \al_1^{(4)} \bs{L}_4 + \al_2^{(4)} \bs{L}_{4}^2 \ ,
\]
with
\[
\al_0^{(4)} = x^{2}+y^2, \qquad \al_1^{(4)}= -2 z, \qquad \al_3^{(4)} =-1 \ . 
\]
A generator of the algebra $\mathscr{H}_5 =\langle\bs{I}, \bs{K}_2, \bs{K}_7\rangle$ is simply $\bs{L}_5=\bs{K}_7$, with
\[
\bs{K}_2= \al_{1}^{(5)} \bs{L}_5+ \al_{2}^{(5)} \bs{L}_5^{2} \ ,
\]
where
\[
\al_1^{(5)}=  \frac{(b+c)x^2+(a+c)y^2+(a+b)z^2}{bcx^2+ac y^2+abz^2} \qquad \al_2^{(5)} = -\frac{1}{c(bx^2+ay^2)+ab z^2} \ .
\]

\subsection{A class of generalized Kepler systems}

We consider a family of deformations of the Kepler system which depends on an arbitrary function $F \left( \frac{y}{x}\right)$. The Hamiltonian function of this family has the form
\begin{equation}\label{eq:gK}
H = H_{1} = \dfrac{1}{2} (p_{x}^{2} + p_{y}^{2} + p_{z}^{2}) - \dfrac{k}{\sqrt{x^{2} + y^{2} + z^{2}}} + \dfrac{k_{1} z}{\sqrt{x^{2} + y^{2} + z^{2}} (x^{2} + y^{2})} + \dfrac{F \left( \frac{y}{x} \right)}{x^{2} + y^{2}}.
\end{equation}

The associated integrals of motion are
\begin{align}
& H_{2} = \dfrac{1}{2} |\mathbf{\mathcal{L}}|^{2} + \dfrac{k_{1} z \sqrt{x^{2} + y^{2} + z^{2}} + (x^{2} + y^{2} + z^{2}) \  F\left( \frac{y}{x} \right)}{x^{2} + y^{2}}, \\
& H_{3} = \dfrac{1}{2} \mathcal{L}_{z}^{2} + F \left( \frac{y}{x} \right), \\
& H_{4} = \mathcal{L}_{x} p_{y} - p_{x} \mathcal{L}_{y} + \dfrac{k z}{\sqrt{x^{2} + y^{2} + z^{2}}} - \dfrac{k_{1} (x^{2} + y^{2} + 2 z^{2})}{\sqrt{x^{2} + y^{2} + z^{2}} (x^{2} + y^{2})} - 2 z \dfrac{F \left( \frac{y}{x} \right)}{x^{2} + y^{2}}.
\end{align}

These integrals form two families of functions in involution: $\{H_1, H_2,H_3\}$ and $\{H_1, H_3,H_4\}$.
The equations of the Haantjes chains are: $\bs{K}_{2}^{T} dH = dH_{2}$, $\bs{K}_{3}^{T} dH = dH_{3}$, $\bs{K}_{5}^{T} dH = dH_{4}$. The Haantjes operators associated are $\bs{K}_{2}$, eq. \eqref{eq:K2},  $\bs{K}_{3}$, eq. \eqref{eq:K3} and  $\bs{K}_{5}$, eq. \eqref{eq:K7}. We obtain again the two Abelian, cyclic algebras $\mathscr{H}_{1} = \langle\bs{I}, \bs{K}_{2}, \bs{K}_{3}\rangle$ and $\mathscr{H}_{4} = \langle \bs{I}, \bs{K}_{3}, \bs{K}_{5}\rangle$, whose generators have been determined above. They provide us with the two coordinate systems admitted by the class of Hamiltonian systems \eqref{eq:gK}, namely the spherical polar and the rotational parabolic ones. 
\subsection{A class of anisotropic oscillators}

An interesting family of deformations of the anisotropic oscillator is given by the Hamiltonian function
\begin{equation} \label{eq:defAO}
H = H_{1} = \dfrac{1}{2} (p_{x}^{2} + p_{y}^{2} + p_{z}^{2}) + k ( x^{2} + y^{2}) + 4 k z^{2} + \dfrac{F\left(\frac{y}{x}\right)}{x^{2} + y^{2}} \ ,
\end{equation}
where again $F$ is an arbitrary function of its argument. It admits the following integrals of motion:
\begin{align}
& H_{2} = \dfrac{1}{2} p_{z}^{2} + 4 k z^{2}, \\
& H_{3} = \dfrac{1}{2} \mathcal{L}_{z}^{2} + F\left( \frac{y}{x} \right), \\
& H_{4} = \mathcal{L}_{x} p_{y} - p_{x} \mathcal{L}_{y} + 2 k z (x^{2}+y^{2}) - 2 z \dfrac{F\left(\frac{y}{x}\right)}{x^{2} + y^{2}}.
\end{align}

Therefore, this class of systems is minimally superintegrable. These integrals form two families of functions in involution: $\{H_1, H_2,H_3\}$ and $\{H_1, H_3,H_4\}$. The equations for the Haantjes chains  are:
$\bs{K}_{1}^{T} dH = dH_{2}$, $\bs{K}_{3}^{T} dH = dH_{3}$, $\bs{K}_{5}^{T} dH = dH_{4}$. The corresponding Haantjes operators are: 

\begin{equation} \label{eq:K1}
\bs{K}_{1} =diag(0,0,1,0,0,1),
\end{equation}
$\bs{K}_{3}$, eq. \eqref{eq:K3} and $\bs{K}_{5}$, eq. \eqref{eq:K7}.
 The family of systems \eqref{eq:defAO} admits the two Abelian algebras 
$\mathscr{H}_{6} := \langle\bs{I}, \bs{K}_{1}, \bs{K}_{3}\rangle$ and $\mathscr{H}_{4} = \langle\bs{I}, \bs{K}_{3}, \bs{K}_{5}\rangle$. It separates in polar cylindrical and rotational parabolic coordinates.

\vs

The algebra $\mathscr{H}_{6}=\langle\bs{I}, \bs{K}_{1}, \bs{K}_{3}\rangle$ admits the generator
\beq
\bs{L}_6= 2 \bs{K}_{1} + \frac{1}{x^{2}+y^{2}} \bs{K}_{3} \ .
\eeq
We have
\beq
\bs{K}_{1}= \al_1^{(6)} \bs{L}_6+ \al_2^{(6)} \bs{L}_{6}^{2} , \qquad
\al_1^{(6)}= - \frac{1}{2}, \qquad \al_2^{(6)}= \frac{1}{2} \ ,
\eeq
and 
\beq
\bs{K}_{3} = \be_1^{(6)} \bs{L}_{6}+ \be_2^{(6)} \bs{L}_{6}^{2}, \qquad
\be_1^{(6)}= 2 ( x^2+y^2 ), \qquad \be_2^{(6)}= - (x^2+y^2 ) \ .
\eeq



\section{Future perspectives} \label{sec:future}

The problem of finding separating variables for integrable Hamiltonian systems is certainly among the most relevant ones of Classical Mechanics. The theory of $\omega \mathscr{H}$ manifolds offers a twofold contribution to this fundamental problem, being of both conceptual and  applicative nature.

From a conceptual point of view, we have shown that the existence of semisimple $\omega\mathscr{H}$ structures ensures that of separating variables. Precisely, as stated in Theorem  \ref{th:SoVgLc}, if an integrable system admits a semisimple Abelian $\omega \mathscr{H}$ structure, then one can construct  a privileged set of coordinates, the Darboux-Haantjes (DH) coordinates, which are   separation coordinates for the Hamilton-Jacobi equation associated with the system. Besides, in these coordinates the symplectic form takes a Darboux form, and the operators of the Haantjes algebra diagonalize simultaneously.


From an applicative point of view, the $\omega \mathscr{H}$ structures represent a very flexible tool, which can be used either  to construct in a consistent way separating coordinates,  or to enhance the applicability of Nijenhuis geometry. In this regard, we have clarified the relationship between $\omega \mathscr{H}$ and $\omega N$ structures. The approach \textit{\`a la Nijenhuis}  represents an important theoretical framework, allowing for the construction of separating variables via the eigenvalues of a suitable semisimple Nijenhuis operator associated with a given integrable system \cite{FP}. The Haantjes geometry can complement and integrate, for practical purposes, the Nijenhuis approach. Indeed, a semisimple Haantjes algebra always admits a Haantjes generator; in addition, such a generator can be chosen to be a Nijenhuis one. Therefore, a possible strategy for finding separating variables is the following:  given a Hamiltonian system, first we can construct in a natural way a $\omega \mathscr{H}$ manifold by solving the equations of its Haantjes chains (without the need for generalized ones, as is often necessary in Nijenhuis geometry \cite{TT2012}). Then, assuming that the algebra $\mathscr{H}$ is semisimple and Abelian, we can always select in $\mathscr{H}$ a Nijenhuis generator, providing us with separating variables (the so called Darboux-Nijenhuis (DN) coordinates). In the multiseparable case, this procedure can be repeated for all the structures allowed by the considered system. 

We mention that finding a Nijenhuis operator without the help of the $\omega \mathscr{H}$ structure can be computationally cumbersome. Usually, one tries to construct Nijenhuis operators as reductions of bi-Hamiltonian structures associated with systems defined in a higher-dimensional space. Another possibility is to generate $\omega N$ structures as Yano liftings of Nijenhuis operators defined in the configuration space. However, these  procedures are not generally applicable: only in some cases one can determine a system (easily tractable within the Nijenhuis approach), that by reduction gives the original system under study (and consequently, its $\omega N$ structure). Also, there are systems, as the Drach-Holt and the Post-Winternitz ones, whose Nijenhuis structure is not obtainable as a Yano complete lift.  

Nevertheless,  as shown in this article (and in many examples of \cite{TT2016prepr},\cite{TT2016SIGMA}), for determining an $\omega \mathscr{H}$  structure    a direct, computationally affordable procedure is available, since the differential equations for the Haantjes chains are in general quite manageable. 

\vs

In short, given an integrable Hamiltonian system, separating coordinates can be found in two different ways: as DH coordinates associated with its $\omega \mathscr{H}$ semisimple structures or, equivalently, via the eigenvalues of the Nijenhuis operator generating the same $\omega \mathscr{H}$ structures (DN coordinates).

Interestingly enough, $\omega \mathscr{H}$ manifolds appear to be a quite ubiquitous geometric structure in the class of superintegrable systems: indeed, they exist also in cases where orthogonal separating variables are not allowed in $T^{*}Q$ and are realized by \textit{non-semisimple} Haantjes operators. This is the case for the anisotropic oscillator described above, and for the Post-Winternitz system \cite{PW2011}, which admits a \textit{non-Abelian} Haantjes algebra, possessing two Abelian subalgebras of non-semisimple Haantjes operators.

We mention that a generalization of the notion of $\omega \mathscr{H}$ manifolds is that of $P\mathscr{H}$ manifolds, introduced in \cite{T2017}, where  $P$ is a  Poisson bivector compatible with the Haantjes algebra $\mathscr{H}$. When $P$ is invertible, a $P \mathscr{H}$ structure reduces to a $\omega \mathscr{H}$ one.

\vs

A fundamental open problem is to study the geometry of non-semisimple $\omega \mathscr{H}$ structures (being in this case more general than $\omega N$ ones), and to ascertain their relevance from the point of view of the general problem of SoV. 

Another interesting open problem is the geometric interpretation of the new, infinite class of generalized torsions introduced in \cite{TT2019prepr} from the perspective of classical Hamiltonian systems.  A crucial result, proved in \cite{TT2019prepr}, states that if the generalized torsion of an operator field vanishes, then its eigen-distributions are mutually integrable. The ultimate implications of this property in the context of the theory of Hamiltonian integrable systems are presently under investigation.

\section*{Acknowledgement}
\noi The research of D. R. N. has been supported by the  Severo Ochoa Programme for Centres of Excellence in R\&D (CEX2019-000904-S), Ministerio de Ciencia, Innovaci\'on y Universidades, Spain.
\noi The research of P. T. has been supported by the research project PGC2018-094898-B-I00, Ministerio de Ciencia, Innovaci\'on y Universidades, Spain, and by the Severo Ochoa Programme for Centres of Excellence in R\&D (CEX2019-000904-S), Ministerio de Ciencia, Innovaci\'on y Universidades, Spain.
\par
The research of G. T. has been supported by the research project FRA2020-2021, Universit\'a degli Studi di Trieste, Italy.

\noi  P. T. is member of Gruppo Nazionale di Fisica Matematica (GNFM) of INDAM.

\appendix

\section{On the Haantjes structure of the system \eqref{eq:HanoscRos}}\label{AppA}

As an illustrative example of the form of the integral of motion $\Re{(\Psi)}$, eq. \eqref{eq:ReQ}, admitted by the system \eqref{eq:HanoscRos}, we shall consider  the case $n_1=1,n_2=3$. The integral  \eqref{eq:ReQ} reads explicitly
\begin{align}
&\Re{(\Psi)}= 9\,{\nu}^{8}{x}^{6}{y}^{2}+27\,{\frac {{c_{1}}^{2}{\nu}^{4}{y}^{2}}{{x
}^{2}}}-27\,c_{1}\,{x}^{2}{\nu}^{6}{y}^{2}+{\frac {{c_{1}}^{3}c_{2}}{{
x}^{6}{y}^{2}}} \\ & -9\,{\frac {{c_{1}}^{3}{\nu}^{2}{y}^{2}}{{x}^{6}}} -{
\frac {{\nu}^{6}{x}^{6}c_{2}}{{y}^{2}}}-3\,{\frac {{c_{1}}^{2}{\nu}^{2
}c_{2}}{{x}^{2}{y}^{2}}}+3\,{\frac {c_{1}\,{x}^{2}{\nu}^{4}c_{2}}{{y}^
{2}}} \nonumber \\
&+ \left( 162\,c_{1}\,{\nu}^{4}{y}^{2}-135\,{\nu}^{6}{x}^{4}{y}^{2
}+3\,{\frac {{c_{1}}^{2}c_{2}}{{x}^{4}{y}^{2}}}-27\,{\frac {{c_{1}}^{2
}{\nu}^{2}{y}^{2}}{{x}^{4}}}-18\,{\frac {c_{1}\,{\nu}^{2}c_{2}}{{y}^{2
}}}+15\,{\frac {{\nu}^{4}{x}^{4}c_{2}}{{y}^{2}}} \right) {{\it p_x}}^{2
} \nonumber \\
& + \left( {\frac {{c_{1}}^{3}}{{x}^{6}}}-{\nu}^{6}{x}^{6}-3\,{\frac {{
c_{1}}^{2}{\nu}^{2}}{{x}^{2}}}+3\,c_{1}\,{x}^{2}{\nu}^{4} \right) {{
\it p_y}}^{2}+ \left( 3\,{\frac {{c_{1}}^{2}}{{x}^{4}}}-18\,c_{1}\,{\nu
}^{2}+15\,{\nu}^{4}{x}^{4} \right) {{\it p_x}}^{2}{{\it p_y}}^{2} \nonumber \\ & +{{\it 
p_x}}^{6}{{\it p_y}}^{2}+ \left( 36\,{\frac {{c_{1}}^{2}{\nu}^{2}y}{{x}^
{3}}}-72\,c_{1}\,x{\nu}^{4}y+36\,{\nu}^{6}{x}^{5}y \right) {\it p_y}\,{
\it p_x} \nonumber \\ & + \left( 135\,{\nu}^{4}{x}^{2}{y}^{2}+3\,{\frac {c_{1}\,c_{2}}{
{x}^{2}{y}^{2}}}-27\,{\frac {c_{1}\,{\nu}^{2}{y}^{2}}{{x}^{2}}}-15\,{
\frac {{\nu}^{2}{x}^{2}c_{2}}{{y}^{2}}} \right) {{\it p_x}}^{4} \nonumber \\
&
+ \left( {\frac {c_{2}}{{y}^{2}}}-9\,{\nu}^{2}{y}^{2} \right) {{\it p_x}
}^{6}+ \left( 72\,{\frac {c_{1}\,{\nu}^{2}y}{x}}-120\,{\nu}^{4}{x}^{3}
y \right) {\it p_y}\,{{\it p_x}}^{3}+ \left( 3\,{\frac {c_{1}}{{x}^{2}}}
-15\,{\nu}^{2}{x}^{2} \right) {{\it p_x}}^{4}{{\it p_y}}^{2} \nonumber \\ & +36\,{\nu}^{
2}{{\it p_x}}^{5}x{\it p_y}\,y \nonumber
\end{align}
and the elements of the Haantjes operator are
\begin{align}
m_{d}=&-270\,{\nu}^{6}{x}^{4}{y}^{2}+324\,c_{1}\,{\nu}^{4}{y}^{2}+30\,{\frac 
{{\nu}^{4}{x}^{4}c_{2}}{{y}^{2}}}-54\,{\frac {{c_{1}}^{2}{\nu}^{2}{y}^
{2}}{{x}^{4}}}-36\,{\frac {c_{1}\,{\nu}^{2}c_{2}}{{y}^{2}}} \\ & +6\,{\frac 
{{c_{1}}^{2}c_{2}}{{x}^{4}{y}^{2}}}+ \left( 540\,{\nu}^{4}{x}^{2}{y}^{
2}-108\,{\frac {c_{1}\,{\nu}^{2}{y}^{2}}{{x}^{2}}}-60\,{\frac {{\nu}^{
2}{x}^{2}c_{2}}{{y}^{2}}}+12\,{\frac {c_{1}\,c_{2}}{{x}^{2}{y}^{2}}}
 \right) {{\it p_x}}^{2} 
\nonumber \\ & + \left( 30\,{\nu}^{4}{x}^{4}-36\,c_{1}\,{\nu}^
{2}+6\,{\frac {{c_{1}}^{2}}{{x}^{4}}} \right) {{\it p_y}}^{2}+ \left( -
60\,{\nu}^{2}{x}^{2}+12\,{\frac {c_{1}}{{x}^{2}}} \right) {{\it p_x}}^{
2}{{\it p_y}}^{2} \nonumber \\
& +{\frac {{\it p_y}}{{\it p_x}} \left( 36\,{\frac {{\nu}^
{2}{c_{1}}^{2}y}{{x}^{3}}}-72\,c_{1}\,x{\nu}^{4}y+36\,{\nu}^{6}{x}^{5}
y \right) }+180\,{{\it p_x}}^{3}xy{\nu}^{2}{\it p_y} \nonumber \\
& + \left( -360\,{\nu}
^{4}{x}^{3}y+216\,{\frac {c_{1}\,{\nu}^{2}y}{x}} \right) {\it p_x}\,{
\it p_y}+ \left( -54\,{\nu}^{2}{y}^{2}+6\,{\frac {c_{2}}{{y}^{2}}}
 \right) {{\it p_x}}^{4}+6\,{{\it p_x}}^{4}{{\it p_y}}^{2} , \nonumber
\end{align}
\begin{align}
m_{21}= &
36\,{\frac {{\nu}^{2}{c_{1}}^{2}y}{{x}^{3}}}-72\,c_{1}\,x{\nu}^{4}y+36
\,{\nu}^{6}{x}^{5}y+ \left( 72\,{\frac {c_{1}\,{\nu}^{2}y}{x}}-120\,{
\nu}^{4}{x}^{3}y \right) {{\it p_x}}^{2} \\ 
 & + \left( 360\,{\nu}^{4}{x}^{3}y
-216\,{\frac {c_{1}\,{\nu}^{2}y}{x}} \right) {{\it p_y}}^{2} \nonumber \\
&+
{\frac {{\it p_y}}{{\it p_x}}
 \left( 36\,{\frac {c_{1}\,{\nu}^{2}c_{2}}{{y}^{2}}}-2\,{\nu}^{6}{x}^{
6}+270\,{\nu}^{6}{x}^{4}{y}^{2}+2\,{\frac {{c_{1}}^{3}}{{x}^{6}}}+6\,c
_{1}\,{x}^{2}{\nu}^{4} \right.} \nonumber \\ & { \left. \qquad \qquad \qquad -324\,c_{1}\,{\nu}^{4}{y}^{2}-30\,{\frac {{\nu}^
{4}{x}^{4}c_{2}}{{y}^{2}}}-6\,{\frac {{c_{1}}^{2}{\nu}^{2}}{{x}^{2}}} +
54\,{\frac {{c_{1}}^{2}{\nu}^{2}{y}^{2}}{{x}^{4}}}-6\,{\frac {{c_{1}}^
{2}c_{2}}{{x}^{4}{y}^{2}}} \right) } \nonumber \\
& + \left( -30\,{\nu}^{2}{x}^{2}+54
\,{\nu}^{2}{y}^{2}+6\,{\frac {c_{1}}{{x}^{2}}}-6\,{\frac {c_{2}}{{y}^{
2}}} \right) {\it p_y}\,{{\it p_x}}^{3}+
 \left( 60\,{\nu}^{2}{x}^{2}-12\,{\frac {c_{1}}{{x}^{2}}} \right) {{
\it p_y}}^{3}{\it p_x} \nonumber \\
& -6\,{{\it p_y}}^{3}{{\it p_x}}^{3} +2\,{{\it p_x}}^{5}
{\it p_y}-180\,{{\it p_x}}^{2}yx{\nu}^{2}{{\it p_y}}^{2} \nonumber
\\&
+
 \left( 30\,{\nu}^{4}{x}^{4}-540
\,{\nu}^{4}{x}^{2}{y}^{2}+6\,{\frac {{c_{1}}^{2}}{{x}^{4}}}-36\,c_{1}
\,{\nu}^{2} \right. \nonumber \\ &  \left. \qquad \qquad \qquad +108\,{\frac {c_{1}\,{\nu}^{2}{y}^{2}}{{x}^{2}}} +60\,{
\frac {{\nu}^{2}{x}^{2}c_{2}}{{y}^{2}}}-12\,{\frac {c_{1}\,c_{2}}{{x}^
{2}{y}^{2}}} \right) {\it p_x}\,{\it p_y} \nonumber \\
&+{\frac {{{\it p_y}}^{3}}{{\it 
p_x}} \left( -30\,{\nu}^{4}{x}^{4}-6\,{\frac {{c_{1}}^{2}}{{x}^{4}}}+36
\,c_{1}\,{\nu}^{2} \right) }+36\,{{\it p_x}}^{4}yx{\nu}^{2} \nonumber \\
& +{\frac {{{
\it p_y}}^{2}}{{{\it p_x}}^{2}} \left( -36\,{\nu}^{6}{x}^{5}y+72\,c_{1}
\,x{\nu}^{4}y-36\,{\frac {{\nu}^{2}{c_{1}}^{2}y}{{x}^{3}}} \right) }, \nonumber
\end{align}
\begin{align}
m_{32}= &
 \left( 324\,y{\nu}^{4}c_{1}+972\,{\frac {{y}^{3}{\nu}^{4}c_{1}}{{x}^{
2}}}-30\,{\frac {{\nu}^{4}{x}^{4}c_{2}}{{y}^{3}}} \right. \\ &  \left. \qquad \qquad \qquad +1080\,{\frac {{\nu}^
{4}{x}^{2}c_{2}}{y}} -54\,{\frac {y{\nu}^{2}{c_{1}}^{2}}{{x}^{4}}}-60\,
{\frac {{\nu}^{2}{x}^{2}{c_{2}}^{2}}{{y}^{5}}}-6\,{\frac {{c_{1}}^{2}c_{2}}{{x}^{4}{y}^{3}}}
\right ) {\it p_x} \nonumber
\\
&+\left (12\,{\frac {c_{1}\,{c_{2}}^{2}}{{x}^{2}{y}^{5}}
}+36\,{\frac {c_{1}\,{\nu}^{2}c_{2}}{{y}^{3}}}
-216\,{\frac {c_{1}\,{
\nu}^{2}c_{2}}{{x}^{2}y}}-270\,{x}^{4}y{\nu}^{6}-
4860\,{x}^{2}{y}^{3}{\nu}^{6} \right) {\it p_x} \nonumber
\\
&+ \left( 
216\,{\frac {c_{1}\,{\nu}^{2}c_{2}}{
x{y}^{2}}}+36\,{\nu}^{6}{x}^{5}+3240\,{x}^{3}{y}^{2}{\nu}^{6} \right. \nonumber \\ &  \left. \qquad \qquad \qquad +36\,{
\frac {{\nu}^{2}{c_{1}}^{2}}{{x}^{3}}}-72\,c_{1}\,x{\nu}^{4}-1944\,{
\frac {c_{1}\,{\nu}^{4}{y}^{2}}{x}}-360\,{\frac {{\nu}^{4}{x}^{3}c_{2}
}{{y}^{2}}} \right ) {\it p_y} \nonumber
\\
&+ \left( -54\,{\frac {c_{1}\,{\nu}^{2}y}{{
x}^{2}}}+30\,{\frac {{\nu}^{2}{x}^{2}c_{2}}{{y}^{3}}}-108\,{\frac {c_{
2}\,{\nu}^{2}}{y}} \right. \nonumber \\ &  \left. \qquad \qquad \qquad -6\,{\frac {c_{1}\,c_{2}}{{x}^{2}{y}^{3}}}+270\,{\nu
}^{4}{x}^{2}y+486\,{y}^{3}{\nu}^{4}+6\,{\frac {{c_{2}}^{2}}{{y}^{5}}}
 \right) {{\it p_x}}^{3} \nonumber
 \\
 &+\frac {1}{{\it p_x}} 
 \left( 
 -54\,c_{1}\,{x}^{2
}{\nu}^{6}y-2916\,{y}^{3}{\nu}^{6}c_{1}+2\,{\frac {{\nu}^{6}{x}^{6}c_{
2}}{{y}^{3}}}-540\,{\frac {{\nu}^{6}{x}^{4}c_{2}}{y}} \right. \nonumber \\ & \left. \qquad \qquad \qquad +54\,{\frac {{c_{
1}}^{2}{\nu}^{4}y}{{x}^{2}}} +486\,{\frac {{y}^{3}{\nu}^{4}{c_{1}}^{2}
}{{x}^{4}}} +30\,{\frac {{\nu}^{4}{x}^{4}{c_{2}}^{2}}{{y}^{5}}}
-18\,\frac {{c_{1}}^{3}{\nu}^{2}y}{{x}^{6}} \right) \nonumber
\\
&+ \frac {1}{{\it p_x}} 
\left (
-2\,{\frac {{c_{1}}^{3}c_{2}}{{
x}^{6}{y}^{3}}}+6\,{\frac {{c_{1}}^{2}{c_{2}}^{2}}{{x}^{4}{y}^{5}}}-6
\,{\frac {c_{1}\,{x}^{2}{\nu}^{4}c_{2}}{{y}^{3}}}+648\,{\frac {c_{1}\,
{\nu}^{4}c_{2}}{y}}+6\,{\frac {{\nu}^{2}{c_{1}}^{2}c_{2}}{{x}^{2}{y}^{
3}}} \right. \nonumber \\ & \left. \qquad \qquad \qquad -108\,{\frac {{\nu}^{2}{c_{1}}^{2}c_{2}}{{x}^{4}y}}-36\,{\frac {c_
{1}\,{\nu}^{2}{c_{2}}^{2}}{{y}^{5}}}+18\,{\nu}^{8}{x}^{6}y+
2430\,{x}^{4}{y}^{3}{\nu}^{8} \right) \nonumber
\\&
+ \left( -18\,y{\nu}^{2}
-2\,{\frac {c_{2}
}{{y}^{3}}} \right) {{\it p_x}}^{5} + \left( -54\,y{\nu}^{2}+6\,{
\frac {c_{2}}{{y}^{3}}} \right) {{\it p_y}}^{2}{{\it p_x}}^{3} \nonumber
\\
&+ \left( 540\,{\nu}^{4}{x}^{2}y-60\,
{\frac {{\nu}^{2}{x}^{2}c_{2}}{{y}^{3}}}+12\,{\frac {c_{1}\,c_{2}}{{x}
^{2}{y}^{3}}}-108\,{\frac {c_{1}\,{\nu}^{2}y}{{x}^{2}}} \right) {\it 
p_x}\,{{\it p_y}}^{2} \nonumber \\ &+ \left( -120\,{\nu}^{4}{x}^{3}-1620\,{\nu}^{4}x{y}
^{2}+72\,{\frac {c_{1}\,{\nu}^{2}}{x}}+180\,{\frac {{\nu}^{2}xc_{2}}{{
y}^{2}}} \right) {{\it p_x}}^{2}{\it p_y} \nonumber
\\
&+{\frac {{\it p_y}}{{{\it p_x}}^{
2}} \left( -72\,{\frac {c_{1}\,x{\nu}^{4}c_{2}}{{y}^{2}}}+36\,{\frac {
{\nu}^{2}{c_{1}}^{2}c_{2}}{{x}^{3}{y}^{2}}}-324\,{\nu}^{8}{x}^{5}{y}^{
2} \right. } \nonumber \\ & { \left. \qquad \qquad \qquad -324\,{\frac {{c_{1}}^{2}{\nu}^{4}{y}^{2}}{{x}^{3}}}+648\,c_{1}\,x{
\nu}^{6}{y}^{2}+36\,{\frac {{\nu}^{6}{x}^{5}c_{2}}{{y}^{2}}} \right) } \nonumber
\\
&+{\frac {{{\it p_y}}^{2}}{{\it px}} \left( -36\,{\frac {c_{1}\,{\nu}^{2
}c_{2}}{{y}^{3}}}-270\,{x}^{4}y{\nu}^{6}-54\,{\frac {y{\nu}^{2}{c_{1}}
^{2}}{{x}^{4}}} \right. } \nonumber \\ & { \left. \qquad \qquad \qquad +6\,{\frac {{c_{1}}^{2}c_{2}}{{x}^{4}{y}^{3}}}+324\,y{
\nu}^{4}c_{1}+30\,{\frac {{\nu}^{4}{x}^{4}c_{2}}{{y}^{3}}} \right) }+
36\,{{\it p_x}}^{4}x{\nu}^{2}{\it p_y}. \nonumber
\end{align}

\end{document}